\documentclass[twocolumn,aps,pra,superscriptaddress]{revtex4-1}

\usepackage{amsmath}
\usepackage{graphicx}
\usepackage[utf8]{inputenc}
\usepackage[normalem]{ulem}
\usepackage{dcolumn}
\usepackage{epsfig}
\usepackage{bm}
\usepackage{array}
\usepackage[english]{babel}
\usepackage{hyperref}
\usepackage{algorithm,algorithmic}
\usepackage{tikz}
\usetikzlibrary{calc,arrows,chains,matrix,positioning,scopes}

\hypersetup{
colorlinks=true,
citecolor=blue,
linkcolor=red,
urlcolor=blue,
 pdfmenubar=true
}

\usepackage{hyperref}
\usepackage{graphicx}
\usepackage{amsmath}
\usepackage{amsfonts}
\usepackage{amssymb}
\usepackage{xcolor}
\definecolor{teal}{rgb}{0.,0.5,.5}
\usepackage{bbm}
\usepackage{calrsfs}
\usepackage{dutchcal}
\usepackage{amsthm}
\usepackage{lipsum}

\AtBeginDocument{%
    \newwrite\bibnotes
    \def\bibnotesext{Notes.bib}
    \immediate\openout\bibnotes=\jobname\bibnotesext
    \immediate\write\bibnotes{@CONTROL{REVTEX41Control}}
    \immediate\write\bibnotes{@CONTROL{%
    apsrev41Control,author="08",editor="1",pages="1",title="0",year="1"}}
     \if@filesw
     \immediate\write\@auxout{\string\citation{apsrev41Control}}%
    \fi
}%

\newtheorem{theorem}{Theorem}

\newcommand{\be}{\begin{equation}}
\newcommand{\ee}{\end{equation}}

\newcommand{\beq}{\begin{eqnarray}}
\newcommand{\eeq}{\end{eqnarray}}

\begin{document}

\title{Genuine Multipartite Nonlocality with Causal-Diagram Postselection}

\author{Valentin~Gebhart}\affiliation{QSTAR, INO-CNR and LENS, Largo Enrico Fermi 2, 50125 Firenze, Italy}
\affiliation{Universit\`a degli Studi di Napoli ”Federico II”, Via Cinthia 21, 80126 Napoli, Italy}

\author{Luca~Pezz\`e}%
\affiliation{QSTAR, INO-CNR and LENS, Largo Enrico Fermi 2, 50125 Firenze, Italy}

\author{Augusto~Smerzi}%
\affiliation{QSTAR, INO-CNR and LENS, Largo Enrico Fermi 2, 50125 Firenze, Italy}

\begin{abstract}
The generation and verification of genuine multipartite nonlocality (GMN) is of central interest for both fundamental research and quantum technological applications, such as quantum privacy. To demonstrate GMN in measurement data, the statistics are commonly postselected by neglecting undesired data. Until now, valid postselection strategies have been restricted to local postselection. A general postselection that is decided after communication between parties can mimic nonlocality, even though the complete data are local. Here, we establish conditions under which GMN is demonstrable even if observations are postselected collectively. Intriguingly, certain postselection strategies that require communication among several parties still offer a demonstration of GMN shared between all parties. The results are derived using the causal structure of the experiment and the no-signalling condition imposed by relativity. Finally, we apply our results to show that genuine three-partite nonlocality can be created with independent particle sources. 
\end{abstract}

\maketitle

{\it Introduction.---}
Bell nonlocality \cite{bell1964,bell1976} is one of the most intriguing discoveries in modern physics. 
Besides its heavily discussed fundamental significance and implications \cite{brunner2014,wiseman2014,scarani2019book}, several technological applications have been developed in fields such as communication \cite{brukner2004,buhrman2010}, quantum cryptography \cite{ekert1991,barrett2005,acin2007,masanes2011,ekert2014}, certified random number generation \cite{pironio2010,colbeck2011}, and quantum computation \cite{raussendorf2001,raussendorf2003}. 
Growing interest is experienced by the field of multipartite nonlocality \cite{hillery1999,brunner2014,svetlichny1987,bancal2009,bancal2013,epping2017,pivoluska2018,ribeiro2018}. 
Here, genuine multipartite nonlocality (GMN), a subclass of multipartite nonlocality, plays a central role.
Genuinely multipartite nonlocal correlations cannot be described by nonlocal correlations confined to different groups of subsystems but require collective nonlocal correlations between all subsystems \cite{svetlichny1987,bancal2009,bancal2013}. 
This stronger form of nonlocality is the key ingredient for many future quantum technologies such as the quantum internet \cite{kimble2008,wehner2018,murta2020} and device-independent multipartite quantum key distributions \cite{hillery1999,epping2017,pivoluska2018,ribeiro2018}, and serves as a detection of genuine multipartite entanglement \cite{werner1989,carvalho2004,horodecki2009,pezze2009,hyllus2012,toth2012,pezze2018,szalay2019,ren2021}.

Imagine a group of $n$ experimental parties that have performed an experiment together. Now they want to examine if their observed results demonstrate the presence of GMN by the violation of a Bell inequality \cite{bell1964,clauser1969}, cf. Fig . \ref{fig:multipartite}.
A first test of the Bell inequalities using the complete measurement statistics does not yield any violation. 
It is known that a common postselection strategy that can be decided locally, i.e., each party knows whether to keep or neglect its measurement result without knowledge of other parties, can be used to verify Bell nonlocality \cite{sciarrino2011}. 
Say that even a local postselection of results does not suffice for a violation of the Bell inequalities. 
Generally, the more data is ignored, the more Bell inequalities can be violated, cf. Figs. \ref{fig:multipartite}(b) and \ref{fig:multipartite}(c). 
However, the correlations could be created by the postselection bias \cite{pearl2016}:
a postselection that is decided collectively by all experimental parties can potentially mimic nonlocal behaviour even if the underlying statistics can be described by local hidden-variable models \cite{aerts1999,cabello2009,decaro1994,lima2010}. 
Can the parties employ strategies beyond local postselection to verify genuine multipartite Bell nonlocality of their correlations? 

\begin{figure}[b]
    \centering
    \includegraphics[width=.9\linewidth]{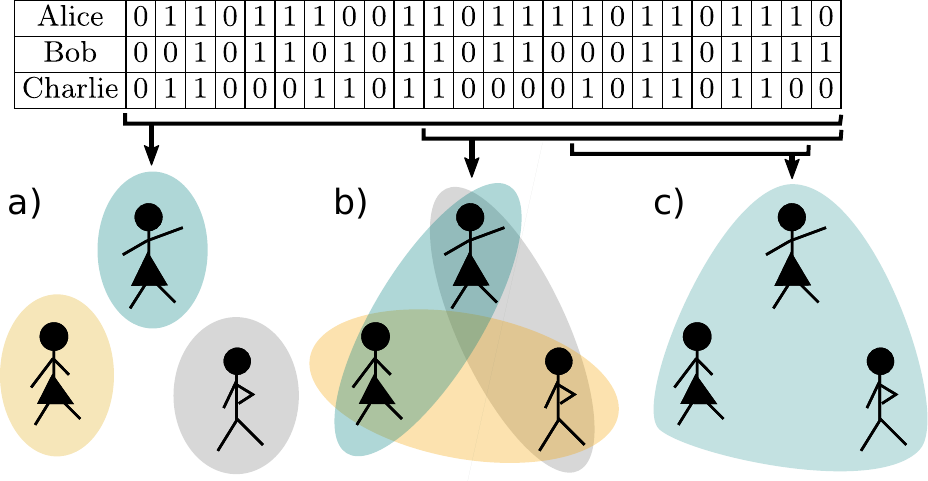}
    \caption{After postselecting their experimental observations, three parties could observe that they share (a) local correlations, (b) multipartite (but not genuine multipartite) nonlocality, or (c) genuine multipartite nonlocality (GMN). When can they be sure that their postselection did not create fake correlations?}
    \label{fig:multipartite}
\end{figure}

An instance of this problem affects a proposal by Yurke and Stoler (YS) \cite{yurke1992a} to create EPR effects \cite{einstein1935} of the GHZ type \cite{greenberger1990,mermin1990} from independent particle sources. 
Unlike common approaches to create nonlocal behaviour or entanglement by an interaction between subsystems that are send to the parties, YS make use of the (bosonic or fermionic) Hong-Ou-Mandel effect \cite{hong1987} to create nonlocality \cite{yurke1992a,yurke1992b}.
A generalization of Ref.~\cite{yurke1992a} became a standard method to create optical GHZ states \cite{greenberger1990,zukowski1993,zukowski2000,pan2012}. The EPR effects in Ref.~\cite{yurke1992a} suffice to exclude local hidden-variable models. However, the necessary postselection to demonstrate GMN cannot be decided locally. Thus, the YS scheme did not show GMN until now.

In this work, we introduce postselection strategies beyond local postselection that can be employed to demonstrate GMN. 
Our analysis of the multipartite Bell scenario provides sufficient conditions such that a collective postselection is valid.
Particularly, in the $n$-partite case, a postselection is valid if it can be equivalently decided after exclusion of any $\left \lfloor{n/2}\right \rfloor$ parties from the decision, where $\left \lfloor{.}\right \rfloor$ is the floor function. 
Thus, somewhat surprisingly, statistics that were postselected through communication of several parties can still serve as a certification of GMN among all parties.
In the three-partite case, the postselection condition simplifies to an all-but-one principle similar to Ref.~\cite{blasiak2020}, where causal diagrams are used to safely postselect statistics to verify general (non-genuine) multipartite nonlocality. 
Here, we employ causal diagrams for hybrid local-nonlocal hidden-variable models (instead of local hidden-variable models \cite{blasiak2020}) to prove valid postselections for GMN. Additionally, we explicitly use the no-signalling principle dictated by relativity. 
The analysis is performed using causal diagrams \cite{pearl2016,blasiak2020} that extend the language of statistics to include causal explanations of multivariate data.
In contrast to the common use of causal diagrams as an explanatory tool in quantum physics, the current work and Ref.~\cite{blasiak2020} show that causal diagrams can be exploited to derive new theorems. 
We finally apply our results to the YS setup \cite{yurke1992a} to show that GMN can be created from independent particle sources.

{\it Main results.---}
Bell nonlocality can be certified if the measured statistics violate a Bell inequality that was derived assuming local realism and free will. 
By the latter assumptions, the joint probability distribution of experimental results can be written as a local hidden-variable model \cite{bell1964,clauser1969}, yielding conditions on the statistics termed Bell inequalities.
To derive Bell inequalities that distinguish between partial nonlocality and GMN, the local hidden-variable model is replaced by a hybrid local-nonlocal hidden-variable model \cite{svetlichny1987,gallego2012,bancal2013,new_defs}. 
A violation of these inequalities demonstrates the presence of GMN. 

Commonly, measured statistics are postselected to obtain statistics that violate Bell inequalities. 
However, postselection potentially creates additional correlations due to the postselection bias \cite{pearle1970,pearl2016,blasiak2020} that can mimic nonlocal behaviour. 
In the following, we introduce postselection strategies that are valid to demonstrate GMN.

First, consider a three-partite Bell scenario. 
The hybrid hidden-variable model asserts that, given that the three parties Alice, Bob and Charlie measure observables $x$, $y$ and $z$, the probability for outcomes $a$, $b$ and $c$ is given by \cite{svetlichny1987,gallego2012,bancal2013}
\begin{align}
& P_{abc|xyz}=\sum_{\lambda_1\in\Lambda_1} P_{\lambda_1}P_{bc|yz\lambda_1}P_{a|x\lambda_1} \notag \\ &+ \sum_{\lambda_2\in\Lambda_2}P_{\lambda_2}P_{ac|xz\lambda_2}P_{b|y\lambda_2} + \sum_{\lambda_3\in\Lambda_3}P_{\lambda_3}P_{ab|xy\lambda_3}P_{c|z\lambda_3},\label{eq:hybrid}
\end{align}
where $\sum_{\lambda\in\Lambda}P_{\lambda}=1$, $\Lambda=\Lambda_1 \cup \Lambda_2 \cup \Lambda_3$. 
Here, we divided the hidden variables $\Lambda$ into subsets $\Lambda_i$ indicating which two parties share nonlocal correlations for a given $\lambda$. The free will assumption was used to write $P_{\lambda|xyz}=P_{\lambda}$, i.e., measurement choices are independent of the hidden variables. Furthermore, the correlations in Eq.~\eqref{eq:hybrid} must fulfill the no-signalling principle \cite{bancal2013}, e.g.,
\begin{equation}
    P_{a|xyz} = P_{a|x}.\label{eq:no-signalling}
\end{equation}
This ensures that no party can send information to others instantaneously by choice of the measurement setting. 
While valid Bell inequalities for GMN can be derived without demanding the no-signalling principle \cite{svetlichny1987}, stronger Bell inequalities can be proven including no-signalling (or one-way signalling) conditions \cite{bancal2013}. Our results only hold if the hybrid hidden-variable model fulfills the no-signalling (or one-way signalling) conditions. 
A diagram describing all possible causal relations of the different random variables is shown in Fig.~\ref{fig:diagrams}(a). Each variable is represented as a capital letter while their possible values are denoted as lowercase letters. 
Solid arrows describe possible causal influences along the arrows' directions \cite{pearl2016}. 
We emphasize that some causal influences are restricted by the no-signalling conditions and cannot be described by any classical causal model without fine-tuning \cite{wood2015,allen2017}.
For instance, while there might be causal influences from $Y$ to $B$ and from $B$ to $A$, there is no causal influence from $Y$ to $A$, cf. Eq.~\eqref{eq:no-signalling}. By conditioning on a particular $\Lambda_i$, the causal diagram can be restricted, see Fig.~\ref{fig:diagrams}(b) for $\Lambda_3$. In the following, boxes around variables indicate that the variables are conditioned on. 

\begin{figure}[htp]
\centering
\includegraphics[width=.9\linewidth]{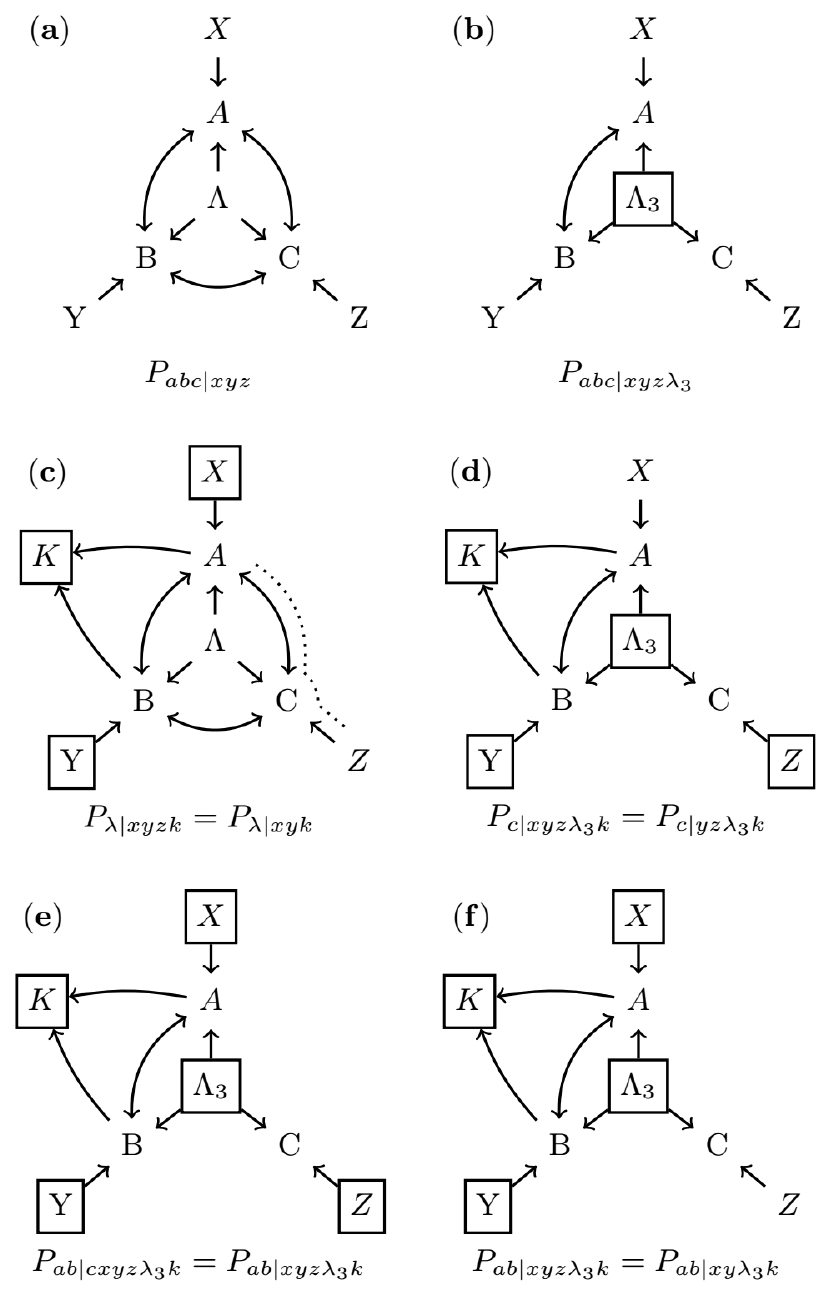}
\caption{(a) Causal diagram of the three-partite hybrid hidden-variable model \eqref{eq:hybrid}. The causal relations obey the no-signalling conditions \eqref{eq:no-signalling}. (b) Causal diagram of the subensemble $\Lambda_3$, allowing for correlations between $A$ and $B$. (c-f) Causal diagrams representing different steps of the proof of Theorem 1 where the postselection ($K$) is decided by two parties. Solid arrows represent possible causal influences between variables. Conditioned variables are marked with a box. In (c), we indicate a fine-tuning condition from the no-signalling principle as a dotted line.}
\label{fig:diagrams}
\end{figure}

Now consider a postselection (represented as a binary variable $K$) of the observed results. We expand the postselected statistics in terms of the local hidden variables, 
\begin{equation}
P_{abc|xyzk}= \sum_{\lambda} P_{\lambda|xyzk} P_{abc|xyz\lambda k}. 
\end{equation}
This exposes that a sufficient condition that $P_{abc|xyzk}$ fulfills a Bell inequality derived from Eq.~\eqref{eq:hybrid}, is that $P_{abc|xyzk}$ factorizes in a similar way: if 
\begin{align}
\mathbf{I}& \quad P_{\lambda|xyzk}=P_{\lambda|k}, \notag \\
\mathbf{IIc}& \quad P_{abc|xyz \lambda_3 k}=P_{ab|xy \lambda_3 k}P_{c|z \lambda_3 k},\notag 
\end{align}
with similar conditions $\mathbf{IIa}$ and $\mathbf{IIb}$ for $\Lambda_1$ and $\Lambda_2$, the postselected statistics are valid to demonstrate GMN. One could also require that the postselected statistics fulfill the no-signalling principle \eqref{eq:no-signalling} if the no-signalling conditions are needed to derive the Bell inequality of interest \cite{bancal2013}.

We now focus on postselection strategies that can be equivalently decided by any subset of the experimental parties of a certain minimum size. For three parties, we consider a postselection $K$ that can be equivalently decided by any two parties, implying that 
\begin{equation}
    P^{(AB)}_{abc|xyzk} = P^{(AC)}_{abc|xyzk} = P^{(BC)}_{abc|xyzk},
\end{equation}
where, e.g., $P^{(AB)}_{abc|xyzk}$ denotes the postselected conditional probability when the postselection is decided by Alice and Bob. Thus, the postselected distribution coincides for whichever two parties reconcile to decide it, and we simply write $P_{abc|xyzk}$ in the following. This decision equivalence requires global conditions on the possible experimental results such that the latter become partially redundant. An example that we will further discuss below are experiments where, for postselection, each party should find a certain number of particles and the total number of particles is conserved.

The central tool to prove our results is causal inference and the d-separation tool set \cite{pearl2016,blasiak2020}: 
given a causal diagram that connects different variables (nodes) with causal relations (arrows), only certain dependencies between variables are possible \cite{pearl2016}. In short, two variables can only be dependent if they are connected by a path. The d-separation rules dictate which paths are blocked when conditioning on other variables of the diagram and read: (1) a path is blocked if there is a collider along the path, i.e., a variable at which causal arrows collide, (2) conditioning on a non-collider along the path blocks the path, (3) conditioning on a collider (or its descendant) along the path unblocks the path. We can now prove our first main result. 
\begin{theorem}
 In the three-partite Bell scenario, a postselection that can be equivalently decided by any two (all-but-one) parties is valid for verification of genuine three-partite nonlocality. 
\end{theorem}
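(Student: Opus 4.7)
The plan is to establish, for each hybrid-model subensemble $\Lambda_i$ with $i=1,2,3$, both sufficient factorisation conditions $\mathbf{I}$ and $\mathbf{II}_i$ by applying the d-separation rules of \cite{pearl2016} to the causal diagram of Fig.~\ref{fig:diagrams}(a), combined with the no-signalling fine-tuning marked in Fig.~\ref{fig:diagrams}(c). The central device is that, since $K$ can equivalently be decided by any two parties, in the analysis of $\Lambda_i$ I am free to adopt the representation of $K$ in which the isolated party $i$ supplies no arrow into $K$, so $K$ becomes a descendant only of the settings and outcomes of the two parties that carry the nonlocal correlation in that subensemble.

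To obtain each condition $\mathbf{II}_i$ I would restrict the diagram by conditioning on $\Lambda_i$ (Fig.~\ref{fig:diagrams}(b) shows the case $\Lambda_3$), so that party $i$ has no nonlocal link to the other two. In the restricted diagram the chosen representation of $K$ admits colliders only at the outcomes of the non-isolated pair, so additionally conditioning on $K$ cannot open any path terminating at party $i$'s outcome. Hence party $i$'s outcome remains d-separated from the other two parties' settings and outcomes given $\Lambda_i$, its own setting, and $K$, which is exactly the factorisation demanded by $\mathbf{II}_i$. Repeating the argument with the other two two-party representations of $K$ yields the other two conditions.

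The main obstacle will be condition $\mathbf{I}$, $P_{\lambda|xyzk}=P_{\lambda|k}$. Free will gives $\Lambda \perp (X,Y,Z)$ unconditionally, but $K$ is a descendant of the outcomes, so conditioning on $K$ opens $X\to A\leftarrow\Lambda$-type collider paths that would a priori create dependence. Here the no-signalling principle must enter as a non-graphical fine-tuning: the requirement that a given party's setting does not statistically affect remote outcomes (dotted arrow in Fig.~\ref{fig:diagrams}(c)) cancels the statistical signal that the d-connected colliders could otherwise transmit. Combined with the targeted choice of representation of $K$ (which keeps party $i$'s outcome out of the colliders altogether), this should close the remaining open paths and yield $\Lambda \perp (X,Y,Z)\mid K$. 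The delicate point I anticipate is precisely that fine-tuning is intrinsically non-graphical, so this step cannot be carried out by d-separation alone but will require an explicit probabilistic manipulation layered on top of the graphical argument. As a final step I would check that the postselected distribution itself satisfies the no-signalling relations demanded by the GMN inequalities of \cite{bancal2013}; the same two-party-representation trick works, since for each remote party one can pick a representation of $K$ insensitive to that party's setting, which preserves no-signalling under postselection.
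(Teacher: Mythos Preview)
Your proposal is correct and follows essentially the same route as the paper. Two clarifications worth noting: the paper treats no-signalling graphically as an extra path-blocking rule (the dotted segments in Fig.~\ref{fig:diagrams}(c)) rather than as a separate probabilistic manipulation, and for condition~$\mathbf{I}$ it makes explicit that one must cycle through all three two-party representations of $K$, peeling off one setting variable per step ($P_{\lambda|xyzk}=P_{\lambda|xyk}=P_{\lambda|xk}=P_{\lambda|k}$), since any single representation only d-separates $\Lambda$ from the setting of the one party \emph{not} feeding into $K$---your phrase ``targeted choice of representation'' gestures at this but leaves the iterative structure implicit.
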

\begin{proof}
Assume that the postselection $K$ can be equivalently decided by any two parties. Thus, we can add to the causal diagrams of Fig.~\ref{fig:diagrams}(a,b) the variable $K$ with causal influences from any two parties, e.g., $A$ and $B$, see Fig.~\ref{fig:diagrams}(c). All of the resulting diagrams are valid and can be used in the proof. 
To prove condition $\mathbf{I}$, we first show $P_{\lambda|xyzk} = P_{\lambda|xyk}$: In Fig.~\ref{fig:diagrams}(c), we condition on $X$, $Y$, and $K$ (indicated as boxes) and examine all possible paths between $\Lambda$ and $Z$. The direct path $Z\rightarrow C \leftarrow \Lambda$ is blocked because $C$ is a collider (that is not conditioned on) along this path. Consider the path $Z\rightarrow C\rightarrow A\rightarrow K \leftarrow B \leftarrow \Lambda$. If there were general causal influences from $Z$ to $A$, this path would be open because the collider $K$ is conditioned on. However, due to the no-signalling conditions, $Z$ cannot influence $A$ and this path is blocked. This blocked path segment is indicated as a dotted line in Fig.~\ref{fig:diagrams}(c). Similarly, all other paths between $\Lambda$ and $Z$ are blocked. Note that the conditioning on $X$ and/or $Y$ can also be removed without unblocking any path. By using similar diagrams as in Fig.~\ref{fig:diagrams}(c) but with $K$ decided by $A$ and $C$ (and conditioning on $X$ and $K$), one shows that $P_{\lambda|xyk} = P_{\lambda|xk}$. Finally, a diagram with $K$ decided by $B$ and $C$ (and conditioning only on $K$) yields $P_{\lambda|xk} = P_{\lambda|k}$ and condition $\mathbf{I}$ follows. For the corresponding causal diagrams, see the Supplementary Material \cite{SI}. No-signalling conditions on the postselected statistics such as, e.g., $P_{a|xyz\lambda k} = P_{a|x\lambda k}$, can be proven in a similar fashion. 
Note that condition $\mathbf{I}$ holds true even if the no-signalling condition is replaced by weaker one-way signalling condition \cite{bancal2013}.\\
To show condition $\mathbf{IIc}$, we first use the chain rule to write 
\begin{equation}
    P_{abc|xyz \lambda_3 k} = P_{ab|cxyz \lambda_3 k}P_{c|xyz \lambda_3 k}.
\end{equation}
We have that $P_{c|xyz \lambda_3 k}=P_{c|yz \lambda_3 k}$, see Fig.~\ref{fig:diagrams}(d), and similarly $P_{c|yz \lambda_3 k}=P_{c|z \lambda_3 k}$. One also has that $P_{ab|cxyz \lambda_3 k}=P_{ab|xyz \lambda_3 k}$ (Fig.~\ref{fig:diagrams}(e)) and $P_{ab|xy \lambda_3 k}$ (Fig.~\ref{fig:diagrams}(f)). This yields condition $\mathbf{IIc}$. Conditions $\mathbf{IIa}$ and $\mathbf{IIb}$ can be shown using the relevant diagrams. 
Note that to prove conditions $\mathbf{II}$, we do not need the no-signalling conditions. 
\end{proof}

Now let us turn to conditions for valid postselections to demonstrate genuine $n$-partite nonlocality for $n>3$. First, consider the case $n=4$. The corresponding local-nonlocal hidden-variable model consists of subensembles that allow nonlocal correlations among at most three parties. Similar to above, an all-but-one postselection is valid when applied to subensembles for which three parties share nonlocality. However, for subensembles in which two pairs share bipartite nonlocality, an all-but-one postselection (i.e., a postselection decided by three parties) can create a postselection bias: the postselected distribution of these subensembles generally does not factorize into two pairs of parties. This insight is discussed in detail in \cite{SI}. Therefore, for $n=4$, we can only exclude a postselection bias if the postselection can be decided by any two parties. Generally, we can prove the following theorem. A detailed proof is given in \cite{SI}. 

\begin{theorem}
 In the $n$-partite Bell scenario, a postselection that can be equivalently decided by any all-but-$\left \lfloor{n/2}\right \rfloor$ parties is valid for verification of genuine $n$-partite nonlocality. 
\end{theorem}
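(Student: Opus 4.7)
The plan is to lift the proof of Theorem~1 to general $n$ by using that the hybrid hidden-variable model excluding genuine $n$-partite nonlocality is a convex combination, over all bipartitions $(S\,|\,\bar S)$ of the $n$ parties, of subensembles $\Lambda_{S|\bar S}$ in which outcomes factor across the cut: $P_{a|x,\lambda_{S|\bar S}} = P_{a_S|x_S,\lambda_{S|\bar S}}\,P_{a_{\bar S}|x_{\bar S},\lambda_{S|\bar S}}$. The postselected statistics satisfy every Bell inequality derived from this model provided the $n$-partite analogues of conditions $\mathbf{I}$ and $\mathbf{II}$ of Theorem~1 are met: measurement independence $P_{\lambda|x_1,\dots,x_n,k}=P_{\lambda|k}$, and preservation of the bipartite factorization of each $\Lambda_{S|\bar S}$ after postselection. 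The hypothesis that $K$ is decidable by any $\lceil n/2\rceil$ parties supplies precisely the freedom needed: for every party $i$ there is a decisive subset $T\not\ni i$ (since $n-1\geq\lceil n/2\rceil$ for $n\geq 3$), and for every bipartition with $|S|\leq|\bar S|$ there is a decisive subset $T\subseteq\bar S$ (since $|\bar S|\geq\lceil n/2\rceil$).

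To prove condition $\mathbf{I}$, I would remove the settings from the conditioning on $\Lambda$ one by one: for each $i$ I work in the diagram obtained by adding $K$ with arrows from $\{A_j\}_{j\in T}$ for some chosen $T\not\ni i$, and verify that $X_i$ is d-separated from $\Lambda$ given the remaining settings and $K$. Any path from $X_i$ to $\Lambda$ is either blocked at the collider $A_i$ in the direct segment $X_i\to A_i\leftarrow\Lambda$, or traverses a chain $X_i\to A_i\to\cdots\to A_j\to K\leftarrow\cdots\leftarrow\Lambda$ with $j\ne i$; the latter amounts to signaling from $X_i$ to a distinct party's outcome and is suppressed by no-signalling, so it is effectively blocked by fine-tuning (as with the dotted segment in Fig.~\ref{fig:diagrams}(c)). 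Iterating over $i$ yields $P_{\lambda|x_1,\dots,x_n,k}=P_{\lambda|k}$, and the no-signalling identities on the postselected statistics follow by the same argument.

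For condition $\mathbf{II}$ applied to a subensemble $\Lambda_{S|\bar S}$, I would instead take $T\subseteq\bar S$ so that $K$ has no parents on the $S$-side. In the restricted diagram for this subensemble there are no direct outcome-to-outcome arrows across the cut; the common cause $\Lambda$ and all settings are conditioned on, blocking every chain or fork through them; and conditioning on the collider $K$ can only d-connect outcomes within $\bar S$, never across the cut, since $K$ has no $S$-parents. The $S$- and $\bar S$-outcomes are therefore d-separated given $x,\lambda_{S|\bar S},k$, and a chain-rule rearrangement analogous to the three-partite proof yields the desired factorization $P_{a|x,\lambda_{S|\bar S},k} = P_{a_S|x_S,\lambda_{S|\bar S},k}\,P_{a_{\bar S}|x_{\bar S},\lambda_{S|\bar S},k}$.

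The hard part will be condition $\mathbf{I}$, which requires careful casework: one must confirm that no long path from $X_i$ to $\Lambda$ can be made d-connecting by chaining several permitted arrows of the full diagram, and in particular that any chain of outcome-to-outcome influences $A_i\to A_{j_1}\to\cdots\to A_{j_m}$ with $j_m\ne i$ is suppressed by no-signalling. Together with the enumeration of diagrams for each of the $2^{n-1}-1$ bipartitions needed in condition $\mathbf{II}$, this is the content that would naturally be relegated to the Supplementary Material, while the sharp threshold $\lfloor n/2\rfloor$ is already dictated by the bipartition-size bound $|\bar S|\geq\lceil n/2\rceil$.
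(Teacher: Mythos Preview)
Your proposal is correct and follows essentially the same approach as the paper's proof: establish condition~$\mathbf{I}$ by iterating over parties, each time choosing a deciding set $T\not\ni i$ so that d-separation (with no-signalling fine-tuning) blocks every path from $X_i$ to $\Lambda$, and establish condition~$\mathbf{II}$ for each bipartition $(S\,|\,\bar S)$ by taking $T\subseteq\bar S$ so that conditioning on the collider $K$ cannot d-connect across the cut. One minor misalignment: the paper notes that condition~$\mathbf{I}$ already holds for all-but-one postselection, so the genuine bottleneck---and the origin of the $\lfloor n/2\rfloor$ threshold---is condition~$\mathbf{II}$, not condition~$\mathbf{I}$ as you suggest.
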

 
{\it Applications.---}
As mentioned above, our findings for $n=3$ can be applied to setups where the number of particles is conserved. This is similar to the findings of Refs.~\cite{zukowski2000,blasiak2020} because undesirable events come in pairs. We now apply our results to the YS proposal of Ref.~\cite{yurke1992a}. 
The corresponding setup for two parties \cite{yurke1992b} makes use of valid local postselection \cite{sciarrino2011,cabello2009}. For three parties, a local postselection is not sufficient to violate Bell inequalities.

The setup of Ref.~\cite{yurke1992a} is shown in Fig.~\ref{fig:yurke}. Three independent sources ($S1$, $S2$ and $S3$) emit a single photon. Each photon enters a beam splitter whose outcoming modes are directed to two measurement parties. At each party, the two incoming modes enter a second beam splitter after which they are measured with photon-counting detectors. Additionally, each party chooses a measurement setting by imprinting a phase ($\phi_A$, $\phi_B$ and $\phi_C$) in one of the incoming modes. Each party $P$ can either detect no photon ($0_P$), a single photon in the left ($l_P$) or right ($r_P$) detector, or two photons in the left ($l_P^2$) or right ($r_P^2$) detector. For perfectly indistinguishable photons, events with a detection both detectors destructively interfere \cite{hong1987}.  

\begin{figure}[b]
    \centering
    \includegraphics[width=.9\linewidth]{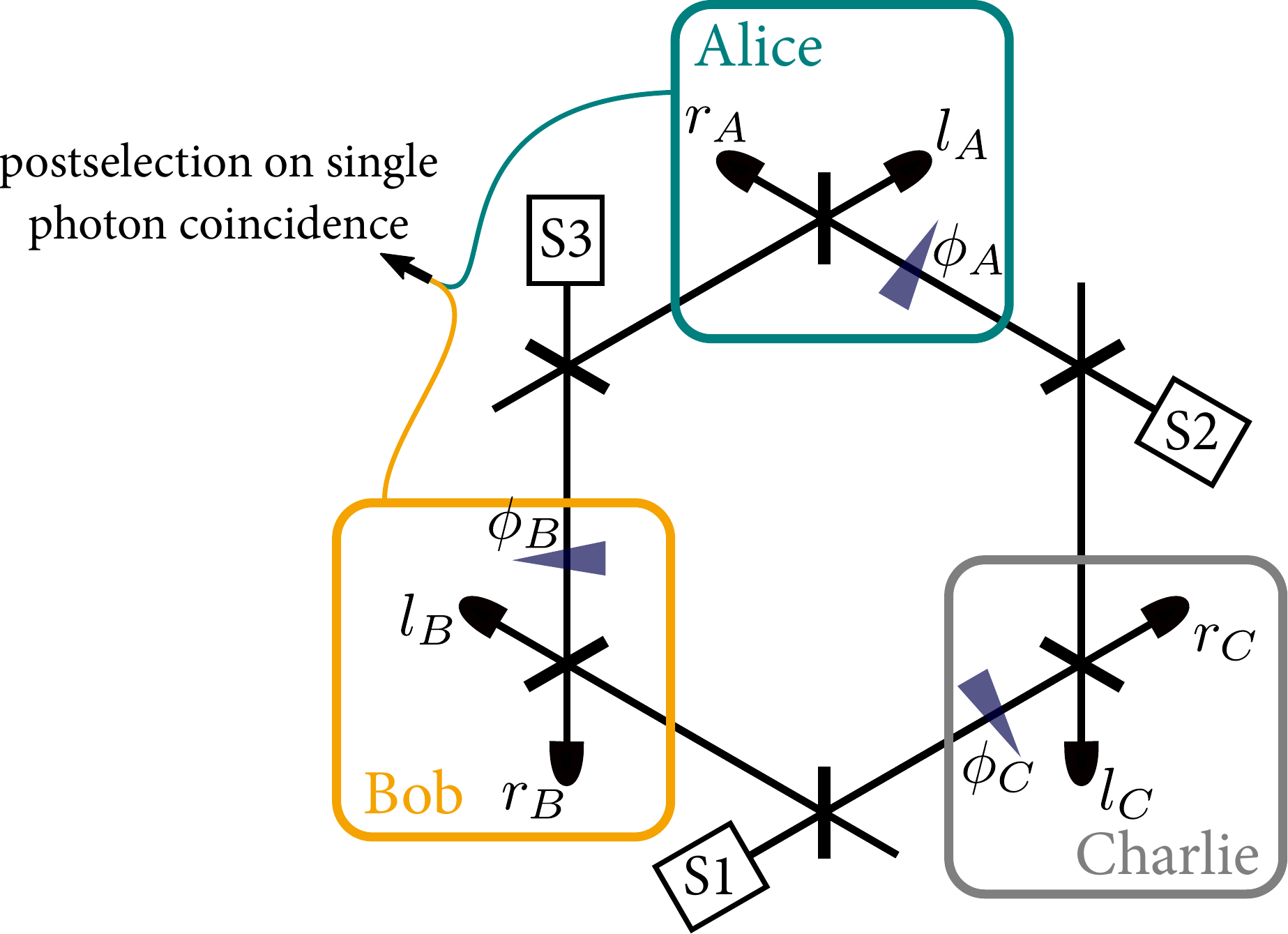}
    \caption{Proposal by Yurke and Stoler \cite{yurke1992a}: Three independent photon sources emit photons that are distributed via beam splitters among Alice, Bob and Charlie. Each party imprints a local phase in one of the incoming arms and measures the two modes after applying a second beam splitter. After postselecting events that show a single detection per party, the statistics show genuinely nonlocal (GHZ-like) features. Since the postselection can be equivalently decided after exclusion of any party, it represents a valid test for Bell inequality violations.}
    \label{fig:yurke}
\end{figure}

Assuming unit detection efficiency, the observed events can be divided into two groups: either each party receives a photon, or one party does not detect a photon and another detects two ($D$). The first group can be further divided into an even ($E$) or odd ($O$) number of right detector clicks. Depending on the total phase $\phi=(\phi_A+\phi_B+\phi_C)/2$, the probability $P(e)$ of an event $e$ is $P(e\in E)=\cos^2(\phi)/16$, $P(e\in O)=\sin^2(\phi)/16$ and $P(e\in D)=1/32$ \cite{yurke1992a}. To observe GMN, we must postselect events in $E$ and $O$. This postselection can be decided by any two parties and, according to Theorem 1, is valid to verify GMN. Indeed, say Alice measures two observables $x_i$ ($i=1,2$) resulting in outcomes $a=1$ ($a=-1$) for Alice's observation of $r_A$ ($l_A$), similarly for Bob and Charlie. Hybrid hidden-variable models \eqref{eq:hybrid} fulfill the three-partite-nonlocality-testing Svetlichny inequality \cite{svetlichny1987} 
\begin{align}
    I = \big| \langle x_1y_1z_1 \rangle + \langle x_1y_1z_2 \rangle + \langle x_1y_2z_1 \rangle + \langle x_2y_1z_1 \rangle \notag \\ 
    -  \langle x_1y_2z_2 \rangle - \langle x_2y_1z_2 \rangle - \langle x_2y_2z_1 \rangle - \langle x_2y_2z_2 \rangle\big|\leq 4 \label{eq:svetlichny},
\end{align}
where $\langle \dots \rangle$ denotes a statistical average. Let Alice choose between $\phi_A=0$ ($x_1$) and $\phi_A=-\pi/2$ ($x_2$), Bob between $\phi_B=\pi/4$ ($y_1$) and $\phi_B=-\pi/4$ ($y_2$) and Charlie between $\phi_C=0$ ($z_1$) and $\phi_C=-\pi/2$ ($z_2$). Then the Svetlichny inequality \eqref{eq:svetlichny} is maximally violated by the postselected statistics, $I=4\sqrt{2}$ \cite{mitchell2004}. Note that, in the hybrid hidden-variable model, we allow for a source of classical shared randomness between all three parties.

In the case $n>3$, a conservation of the number of particles is not sufficient that postselection can be decided by all-but-$\left \lfloor{n/2}\right \rfloor$ parties. Further constraints or conservation laws imposed on the possible events are required.

In experiments, finite detection efficiencies open an additional loophole, the detection loophole \cite{pearle1970,brunner2014,scarani2019book}. In schemes such as the YS proposal that fulfill the all-but-one principle for perfect efficiencies, for realistic efficiencies, the postselection cannot be decided any two parties anymore. Commonly, the detection loophole is circumvented by the fair-sampling assumption \cite{brunner2014,scarani2019book} that the detection of incoming particles does not depend on the measurement setting of the detector. The detection loophole can be rigorously closed by sharpening the Bell inequalities \cite{larsson1998} or taking into account all observed events \cite{sciarrino2011}. An application of these methods to the YS proposal is beyond the scope of this work.

{\it Conclusions.---}
We have introduced postselection strategies beyond local postselection such that the postselected statistics can validly be used to examine Bell inequality violation and verification of GMN. 
In the $n$-partite Bell scenario, we have shown that postselected statistics represent valid tests of multipartite Bell inequalities if the postselection can be equivalently decided by any all-but-$\left \lfloor{n/2}\right \rfloor$ parties. 
Thus, certain partially collaborative postselection strategies do not hinder the certification of GMN. 
Furthermore, the probability of successful postselection generally can be arbitrarily small.
For three parties, our results reduce to an all-but-one principle and can be applied to setups where the total number of particles is conserved. 
Particularly, for the proposal by Yurke and Stoler \cite{yurke1992a}, the postselected statistics are shown to maximally violate the Svetlichny inequality, demonstrating the creation of GMN from independent particle sources. 
Our results crucially facilitate the development of future quantum technologies due to the key role played by GMN.
The explicit use of causal diagrams in the proofs highlights their potential as a new tool in quantum and general physics.

\begin{acknowledgments}
{\it Acknowledgments.---}
This work was supported by the European Commission through the QuantERA ERA-NET Cofund in Quantum Technologies project “CEBBEC”.
\end{acknowledgments}

\clearpage
\onecolumngrid

\begin{center}
{\large \bf Genuine Multipartite Nonlocality with Causal-Diagram Postselection: \\
Supplemental Material  }
\end{center}

\begin{center}
\begin{minipage}{.85\linewidth}
 In this Supplementary Material, we first give further details and causal diagrams to support the proof of condition \textbf{I} of Theorem 1 in the main text. Then, we provide the proof and some discussion of Theorem 2 in the main text.
\end{minipage}
\end{center}

\section{Additional details for the proof of Theorem 1}

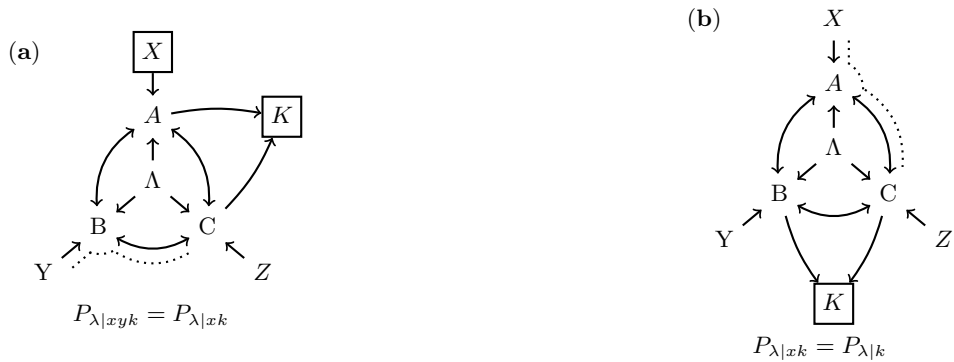
\begin{figure}[b]
\begin{minipage}{.49\linewidth}
\centering 

\begin{tikzpicture}[thick]
\centering

\node[draw,rectangle,text height=7pt,text depth=2pt] (X) {$X$};
\node[text height=7pt,text depth=2pt,left=1.1cm of X] (label) {$\mathbf{(a)}$};
\node[text height=7pt,text depth=2pt,below=.3cm of X] (A) {$A$};
\node[text height=7pt,text depth=2pt,below=.3cm of A] (Lambda) {$\Lambda$};
\node[text height=7pt,text depth=2pt,below left=.05cm and 0.25cm of Lambda] (B) {B};
\node[text height=7pt,text depth=2pt,below left=.05cm and 0.25cm of B] (Y) {Y};
\node[text height=7pt,text depth=2pt,below right=.05cm and 0.25cm of Lambda] (C) {C};
\node[text height=7pt,text depth=2pt,below right=.05cm and 0.25cm of C] (Z) {$Z$};
\node[draw,rectangle,text height=7pt,text depth=2pt,right=1.2cm of A] (K) {$K$};

\draw[->] (Lambda) edge (A)
          (Lambda) edge (B)
          (Lambda) edge (C)
          (X) edge (A)
          (Y) edge (B)
          (Z) edge (C)
          (A) edge[bend left=10] (K)
          (C) edge[bend right=10] (K)
;
\draw[<->]
          (A) edge[bend right=30]  (B)
          (A) edge[bend left=30]  (C)
          (B) edge[bend right=30]  (C)
;


\draw[-]
          (B) edge[dotted,bend right=30,transform canvas={yshift=-2mm}]  (C)
          (Y) edge[dotted,transform canvas={xshift=1.4mm,yshift=-1.3mm}] (B)
;
\draw[dotted] (B)+(-.12,-.33) arc[start angle=250, end angle=300, radius=4mm]
;
\end{tikzpicture}

    $P_{\lambda|xyk}=P_{\lambda|xk}$
\end{minipage}
\begin{minipage}{.49\linewidth}
\centering 

\begin{tikzpicture}[thick]
\centering

\node[text height=7pt,text depth=2pt] (X) {$X$};
\node[text height=7pt,text depth=2pt,left=1.1cm of X] (label) {$\mathbf{(b)}$};
\node[text height=7pt,text depth=2pt,below=.3cm of X] (A) {$A$};
\node[text height=7pt,text depth=2pt,below=.3cm of A] (Lambda) {$\Lambda$};
\node[text height=7pt,text depth=2pt,below left=.05cm and 0.25cm of Lambda] (B) {B};
\node[text height=7pt,text depth=2pt,below left=.05cm and 0.25cm of B] (Y) {Y};
\node[text height=7pt,text depth=2pt,below right=.05cm and 0.25cm of Lambda] (C) {C};
\node[text height=7pt,text depth=2pt,below right=.05cm and 0.25cm of C] (Z) {$Z$};
\node[draw,rectangle,text height=7pt,text depth=2pt,below=1.5cm of Lambda] (K) {$K$};

\draw[->] (Lambda) edge (A)
          (Lambda) edge (B)
          (Lambda) edge (C)
          (X) edge (A)
          (Y) edge (B)
          (Z) edge (C)
          (C) edge[bend left=10] (K)
          (B) edge[bend right=10] (K)
;
\draw[<->]
          (A) edge[bend right=30]  (B)
          (A) edge[bend left=30]  (C)
          (B) edge[bend right=30]  (C)
;

\draw[-]
          (A) edge[dotted,bend left=30,transform canvas={xshift=1.7mm,yshift=.9mm}]  (C)
          (X) edge[dotted,transform canvas={xshift=2mm}] (A)
;
\draw[dotted] (A)+(.19,.28) arc[start angle=60, end angle=-5, radius=4mm]
;

\end{tikzpicture}

    $P_{\lambda|xk}=P_{\lambda|k}$
\end{minipage}

   \caption{Causal diagrams that are used in the proof of condition \textbf{I} of Theorem 1 in the main text, to show conditional dependence of (a) $Y$ and $\Lambda$ and of (b) $X$ and $\Lambda$. Below the diagrams, we note the conditional independence relation that is demonstrated. We indicate some of the no-signalling conditions with a dotted line.}
    \label{fig3:proof3}
\end{figure}

In the proof of Theorem 1 in the main text, we have to show the free will condition
\begin{equation}
    \mathbf{I} \quad P_{\lambda|xyzk}=P_{\lambda|k}. \notag
\end{equation}
In the main text, we only show the first step of this proof, $P_{\lambda|xyzk}=P_{\lambda|xyk}$, by means of the causal diagram depicted in Fig.~2(c) in the main text. Here, we provide the causal diagrams that are used to prove the two other steps needed for condition \textbf{I}. 

To prove that $P_{\lambda|xyk}=P_{\lambda|xk}$, we have to use a causal diagram where the postselection $K$ is decided by Alice and Charlie, c.f. Fig.~\ref{fig3:proof3}(a). This diagram is equally valid as the one of Fig.~2(c) in the main text because, by assumption, the postselection $K$ can be decided by \emph{any} two parties. In contrast to Fig.~2(c) in the main text, we also only condition on $K$ and on only one further measurement choice ($X$). With the help of the diagram in Fig.~\ref{fig3:proof3}(a), we see that all paths between $Z$ and $\Lambda$ are blocked according to the d-separation rules: the path $Y\rightarrow B \rightarrow \Lambda$ is blocked because $B$ is a collider that is not conditioned on. Furthermore, the path $Y\rightarrow B \rightarrow C \rightarrow K \rightarrow A \rightarrow \Lambda$ is blocked because of the no-signalling condition that forbids any causal influence from $Y$ to $C$ [indicated as dotted line in Fig.~\ref{fig3:proof3}(a)]. The same reasoning holds true for the path $Y\rightarrow B \rightarrow A \rightarrow K \rightarrow C \rightarrow \Lambda$. 

Finally, to prove $P_{\lambda|xk}=P_{\lambda|k}$, we use a causal diagram where the postselection $K$ is decided by Bob and Charlie, c.f. Fig.~\ref{fig3:proof3}(b). In contrast to Fig.~2(c) in the main text and to Fig.~\ref{fig3:proof3}(a), we only condition on $K$. Again, the path $X\rightarrow A \rightarrow \Lambda$ is blocked because $A$ is a collider that is not conditioned on. Furthermore, similar as above, the paths $X\rightarrow A \rightarrow C \rightarrow K \rightarrow B \rightarrow \Lambda$ and $X\rightarrow A \rightarrow B \rightarrow K \rightarrow C \rightarrow \Lambda$ are blocked due to the no-signalling conditions.

\section{Proof of theorem 2}

Here, we prove Theorem 2 of the main text. In particular, we consider the case of an $n$-partite Bell scenario and show that a postselection that can be decided by (any) all but $\left \lfloor{n/2}\right \rfloor$ parties ($\left \lfloor{\cdot}\right \rfloor$ is the floor function) is valid to demonstrate genuine $n$-partite nonlocality. We assume that the $l$th party can choose a binary measurement setting $x_l$ and obtains an outcome $a_l$. The hybrid local-nonlocal hidden-variable model is now a mixture of all different possibilities of describing the joint probability distribution with nonlocal resources excluding genuine $n$-partite nonlocality. Formally, the hybrid model dictates that the joint probability distribution is of the form 
\be
P_{a_1\dots a_n|x_1 \dots x_n} =\sum_j\sum_{\lambda_j\in\Lambda_j}P_{\lambda_j}P_{a_1\dots a_n|x_1 \dots x_n\lambda_j},
\ee
where for each $j$, $P_{a_1\dots a_n|x_1 \dots x_n\lambda_j}$ has to factorize into some bipartition, for instance,  
\be 
P_{a_1\dots a_n|x_1 \dots x_n\lambda_j} = P_{a_1\dots a_l|x_1 \dots x_l\lambda_j}P_{a_{l+1}\dots a_n|x_{l+1} \dots x_n\lambda_j}. \label{eq:factor_n}
\ee
Furthermore, the no-signalling conditions $P_{a_l|x_1 \dots x_n}=P_{a_l|x_l}$ have to be fulfilled. 

To examine whether a given postselection strategy is valid to demonstrate genuine $n$-partite nonlocality, we have to check if the postselection preserves the factorization structure of the hybrid local-nonlocal model. For this purpose, we first use the chain rule to write 
\be
P_{a_1\dots a_n|x_1 \dots x_n k} =\sum_j\sum_{\lambda_j\in\Lambda_j}P_{\lambda_j|x_1 \dots x_n k}P_{a_1\dots a_n|x_1 \dots x_n \lambda_j k}.
\ee
We see that to secure a valid postselection strategy, we have to prove conditions similar the ones that we had to prove for Theorem 1 of the main text: the postselected statistics must fulfill the free-will condition (cf. condition $\mathbf{I}$ in main text) 
\begin{equation}
    P_{\lambda |x_1 x_2 \dots x_n k } = P_{\lambda|k}\label{eq:freewill}
\end{equation}
and the different subensembles must still factorize as in Eq. \eqref{eq:factor_n} after postselection (cf. conditions $\mathbf{II}$ in main text), 
\be 
P_{a_1\dots a_n|x_1 \dots x_n\lambda_j k} = P_{a_1\dots a_l|x_1 \dots x_l\lambda_jk}P_{a_{l+1}\dots a_n|x_{l+1} \dots x_n\lambda_jk}, \label{eq:condition_factor_n}
\ee
with similar conditions for the other subensembles.

Now assume that the postselection $K$ can be decided by any $n-\left \lfloor{n/2}\right \rfloor$ (all but $\left \lfloor{n/2}\right \rfloor$) parties. This implies that any causal diagram where $K$ is only influenced by $n-\left \lfloor{n/2}\right \rfloor$ parties (i.e., connected by an arrow from these parties' measurement results), is a valid causal diagram that can be used in the proof. For instance, in Fig.~\ref{fig:n_partite}(a), $K$ is decided by the $n-\left \lfloor{n/2}\right \rfloor$ last parties. 

\begin{figure}[b]
\begin{minipage}{.45\linewidth}
\begin{tikzpicture}[thick]
\centering 

\node[text height=7pt,text depth=2pt] (A1) {$A_1$};
\node[text height=7pt,text depth=2pt,right=.3cm of A1] (dotsleft) {$\cdots$};
\node[text height=7pt,text depth=2pt,right=.3cm of dotsleft] (Ak) {$A_{\left \lfloor{n/2}\right \rfloor}$};
\node[right=.1cm of Ak] (center) {};
\node[text height=7pt,text depth=2pt,right=.1cm of center] (Akplus1) {$A_{\left \lfloor{n/2}\right \rfloor+1}$};
\node[text height=7pt,text depth=2pt,right=.3cm of Akplus1] (dotsright) {$\cdots$};
\node[text height=7pt,text depth=2pt,right=.3cm of dotsright] (An) {$A_n$};

\node[text height=7pt,text depth=2pt,below=1.1cm of center] (lambda) {$\Lambda$};
\node[draw,rectangle,text height=7pt,text depth=2pt,above=.6cm of center] (K) {$K$};
\node[text height=7pt,text depth=2pt,left=2.5cm of K] (label) {$\mathbf{(a)}$};

\node[text height=7pt,text depth=2pt,below left = .2cm and -0.1cm of A1] (X1) {$X_1$};
\node[draw,rectangle,text height=7pt,text depth=2pt,below left=.2cm and -0.3cm of Ak] (Xk) {$X_{\left \lfloor{n/2}\right \rfloor}$};
\node[draw,rectangle,text height=7pt,text depth=2pt,below right=.2cm and -0.6cm of Akplus1] (Xkplus1) {$X_{\left \lfloor{n/2}\right \rfloor+1}$};
\node[draw,rectangle,text height=7pt,text depth=2pt,below right=.2cm and -0.1cm of An] (Xn) {$X_n$};

\draw[densely dashed, rounded corners, thin]
($(A1)+(-0.3,0.2)$) rectangle ($(An)+(0.3,-0.25)$);

\draw[->] (X1) edge (A1)
          (Xk) edge (Ak)
          (Xkplus1) edge (Akplus1)
          (Xn) edge (An)
          (lambda) edge[bend right=45] (An)
          (lambda) edge (Akplus1)
          (lambda) edge(Ak)
          (lambda) edge[bend left=45] (A1)
          (An) edge[bend right=15] (K)
          (Akplus1) edge (K)

;

\end{tikzpicture}\\
$P_{\lambda |x_1 x_2 \dots x_n k } = P_{\lambda |x_2 x_3\dots x_n k}$

\end{minipage}
\hspace{.7cm}
\begin{minipage}{.45\linewidth}
\begin{tikzpicture}[thick]
\centering 

\node[text height=7pt,text depth=2pt] (A1) {$A_1$};
\node[text height=7pt,text depth=2pt,right=.3cm of A1] (dotsleft) {$\cdots$};
\node[text height=7pt,text depth=2pt,right=.3cm of dotsleft] (Ak) {$A_{\left \lfloor{n/2}\right \rfloor}$};
\node[right=.1cm of Ak] (center) {};
\node[text height=7pt,text depth=2pt,right=.1cm of center] (Akplus1) {$A_{\left \lfloor{n/2}\right \rfloor+1}$};
\node[text height=7pt,text depth=2pt,right=.3cm of Akplus1] (dotsright) {$\cdots$};
\node[text height=7pt,text depth=2pt,right=.3cm of dotsright] (An) {$A_n$};

\node[draw,rectangle,text height=7pt,text depth=2pt,below=1.1cm of center] (lambda) {$\Lambda_{j_0}$};
\node[draw,rectangle,text height=7pt,text depth=2pt,above=.6cm of center] (K) {$K$};
\node[text height=7pt,text depth=2pt,left=2.5cm of K] (label) {$\mathbf{(b)}$};

\node[draw,rectangle,text height=7pt,text depth=2pt,below left = .2cm and -0.1cm of A1] (X1) {$X_1$};
\node[draw,rectangle,text height=7pt,text depth=2pt,below left=.2cm and -0.3cm of Ak] (Xk) {$X_{\left \lfloor{n/2}\right \rfloor}$};
\node[draw,rectangle,text height=7pt,text depth=2pt,below right=.2cm and -0.6cm of Akplus1] (Xkplus1) {$X_{\left \lfloor{n/2}\right \rfloor+1}$};
\node[draw,rectangle,text height=7pt,text depth=2pt,below right=.2cm and -0.1cm of An] (Xn) {$X_n$};

\draw[densely dashed, rounded corners, thin]
($(A1)+(-0.3,0.2)$) rectangle ($(Ak)+(0.5,-0.25)$);
\draw[densely dashed, rounded corners, thin]
($(Akplus1)+(-0.7,0.2)$) rectangle ($(An)+(0.3,-0.25)$);

\draw[->] (X1) edge (A1)
          (Xk) edge (Ak)
          (Xkplus1) edge (Akplus1)
          (Xn) edge (An)
          (lambda) edge[bend right=45] (An)
          (lambda) edge (Akplus1)
          (lambda) edge(Ak)
          (lambda) edge[bend left=45] (A1)
          (An) edge[bend right=15] (K)
          (Akplus1) edge (K)
;

\end{tikzpicture}\\
$P_{a_1\dots a_{\left \lfloor{n/2}\right \rfloor}|a_{\left \lfloor{n/2}\right \rfloor+1}\dots a_n x_1 \dots x_n\lambda_{j_0}}=P_{a_1\dots a_{\left \lfloor{n/2}\right \rfloor}|x_1 \dots x_n\lambda_{j_0} k}$

\end{minipage}
    \caption{Different causal diagrams in the $n$-partite Bell scenario. (a) The causal diagram that is used in the first step of the proof that the postselected statistics fulfills the free will condition, Eq.~\eqref{eq:freewill}. (b) The causal diagram that is used to show Eq.~\eqref{eq:cond2second}, contributing to the proof of the factorization condition, Eq.~\eqref{eq:condition_factor_n}, for $l=\left \lfloor{n/2}\right \rfloor$. The dashed boxes indicate possible nonlocal correlations that are shared between (different subgroups of) the parties. The no-signalling conditions are implied.}
    \label{fig:n_partite}
\end{figure}
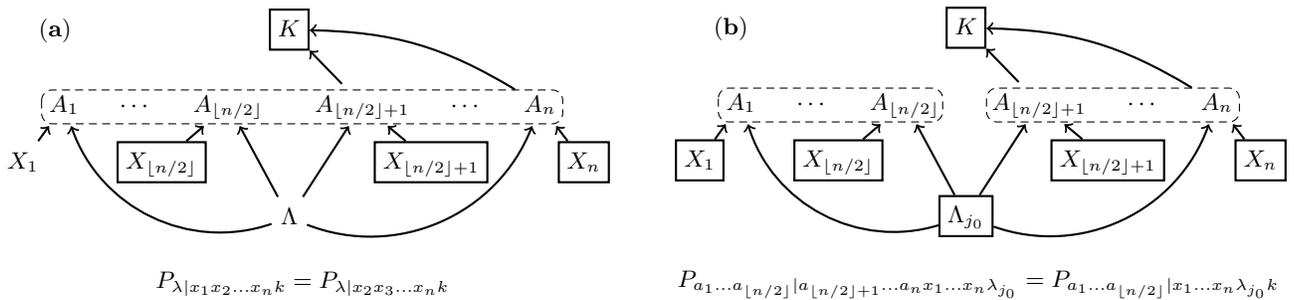

The causal diagram shown in Fig.~\ref{fig:n_partite}(a) describes the first step of the proof of the free will condition. Note that here, in contrast to the causal diagrams in Fig.~2 of the main text, we symbolically indicate possible nonlocal correlations between outcomes with dashed boxes. No-signalling conditions are implied. The shown diagram shows that we have 
$P_{\lambda |x_1 x_2 \dots x_n k } = P_{\lambda |x_2 x_3\dots x_n k}$ for the following two observations: (i) the path $X_1\rightarrow A_1 \leftarrow \Lambda$ is blocked because $A_1$ is a collider, and (ii) all other paths have to be of the form $X_1\rightarrow A_1 \rightarrow A_l\rightarrow \dots $ for $l\neq 1$. The latter paths are blocked due to the no-signalling condition that forbids causal influences from $X_1$ to $A_l$. As described in the proof of Theorem 2, note that the above independence also holds if one removes the conditioning on (parts of) the other measurement settings. Then, conditional independence of $\Lambda$ from the other $X_l$ (with a corresponding conditioning on $K$ and parts of the remaining measurement settings) can be shown in a similar way but by making use of different causal diagrams: to show that $\Lambda$ and $X_l$ are conditionally independent, one has to consider a causal diagram where $K$ is not decided by $A_l$. Finally, we find that the free-will condition, Eq.~\eqref{eq:freewill},
is fulfilled. Analogously, one can check that the no-signalling principle also holds for the postselected statistics. We want to note that until now, the proof remains valid even if the postselection can only be decided by all but one parties.

Next, we consider the factorization conditions, Eq.~\eqref{eq:condition_factor_n}. We first take the subensemble $\Lambda_{j_0}$ that corresponds to a factorization as in Eq.~\eqref{eq:factor_n} with $l=\left \lfloor{n/2}\right \rfloor$. It generally holds that 
\be 
P_{a_1\dots a_n|x_1 \dots x_n\lambda_{j_0} k} = P_{a_1\dots a_{\left \lfloor{n/2}\right \rfloor}|a_{\left \lfloor{n/2}\right \rfloor+1}\dots a_n x_1 \dots x_n\lambda_{j_0} k}P_{a_{\left \lfloor{n/2}\right \rfloor+1}\dots a_n| x_1 \dots x_n\lambda_{j_0} k}.\label{eq:cond2first}
\ee
Then, we use the causal diagram shown in Fig.~\ref{fig:n_partite}(b), where we restrict ourselves on the subensemble $\Lambda_{j_0}$ and where $K$ is decided by the $n-\left \lfloor{n/2}\right \rfloor$ parties $A_{\left \lfloor{n/2}\right \rfloor+1},\dots, A_n$, to write 
\be 
P_{a_1\dots a_{\left \lfloor{n/2}\right \rfloor}|a_{\left \lfloor{n/2}\right \rfloor+1}\dots a_n x_1 \dots x_n\lambda_{j_0}}=P_{a_1\dots a_{\left \lfloor{n/2}\right \rfloor}|x_1 \dots x_n\lambda_{j_0} k}.\label{eq:cond2second}
\ee 
This equation holds because the only possible paths that connect $A_1,\dots, A_{\left \lfloor{n/2}\right \rfloor}$ to $A_{\left \lfloor{n/2}\right \rfloor+1},\dots, A_n$ pass through the node $\Lambda_{j_0}$ that is conditioned on, such that the paths are blocked. Note that, in Fig.~\ref{fig:n_partite}(b), the nonlocal influences (indicated as dashed boxes) are confined to the two groups $A_1,\dots, A_{\left \lfloor{n/2}\right \rfloor}$ and $A_{\left \lfloor{n/2}\right \rfloor+1},\dots, A_n$, respectively, as dictated by Eq.~\eqref{eq:factor_n}.

By using the same diagram as in Fig.~\ref{fig:n_partite}(b) but conditioning on $\Lambda_{j_0}$, $K$, and $X_1,\dots, X_{\left \lfloor{n/2}\right \rfloor}$, one shows that 
\be 
P_{a_1\dots a_{\left \lfloor{n/2}\right \rfloor}|x_1 \dots x_n\lambda_{j_0} k}=P_{a_1\dots a_{\left \lfloor{n/2}\right \rfloor}|x_1 \dots x_{\left \lfloor{n/2}\right \rfloor}\lambda_{j_0} k},
\ee
because all possible paths again pass through the conditioned node $\Lambda_{j_0}$ and are thus blocked. Finally, the same diagram with conditioning on $\Lambda_{j_0}$, $K$, and $X_{\left \lfloor{n/2}\right \rfloor+1},\dots, X_n$ proves 
\be 
P_{a_{\left \lfloor{n/2}\right \rfloor+1}\dots a_n| x_{1} \dots x_n\lambda_{j_0} k} = P_{a_{\left \lfloor{n/2}\right \rfloor+1}\dots a_n| x_{\left \lfloor{n/2}\right \rfloor+1} \dots x_n\lambda_{j_0} k}\label{eq:cond2last}. 
\ee
By combining Eqs.~(\ref{eq:cond2first}-\ref{eq:cond2last}), we obtain the factorization condition, Eq.~\eqref{eq:condition_factor_n}, for $l=\left \lfloor{n/2}\right \rfloor$. 

At this point, one can directly see why the factorization conditions only hold if the postselection can be decided by all but $\left \lfloor{n/2}\right \rfloor$ parties. If any additional party has to be included to decide the postselection $K$, there has to be at least one connection from the group $A_1,\dots, A_{\left \lfloor{n/2}\right \rfloor}$ to $K$, see, e.g., Fig.~\ref{fig:n_partite2}(a). Thus, Eq.~\eqref{eq:cond2second} generally does not hold because the two groups $A_1,\dots, A_{\left \lfloor{n/2}\right \rfloor}$ and $A_{\left \lfloor{n/2}\right \rfloor+1},\dots, A_n$ are connected by a path through the collider $K$. This path is unblocked according to the d-separation rules (see main text) because the collider $K$ is conditioned on. In the main text, we have exemplified this problem in the case of $n=4$, where the diagram in Fig.~\ref{fig:n_partite2}(a) consists of two pairs of nonlocality-sharing parties, and a postselection that has to be decided by three parties necessarily connects the two pairs. Thus, we see why a general postselection that is decided with an all-but-one principle is not valid to demonstrate genuine $n$-partite nonlocality for $n>3$. 

Finally, we note that other subensembles $\Lambda_j$ that factorize as in Eq.~\eqref{eq:factor_n} with $l<\left \lfloor{n/2}\right \rfloor$, also factorize according to Eq.~\eqref{eq:condition_factor_n}. The only difference to the above proof is that, in the causal diagram of Fig.~\ref{fig:n_partite}(b), the postselection $K$ can be decided by merely a subgroup of the parties $A_{l+1},\dots, A_n$, see Fig.~\ref{fig:n_partite2}(b). Thus, all above steps of the proof remain valid. All remaining subensembles $\Lambda_j$ fulfill their corresponding factorization condition as well. This can be seen by considering the appropriate causal diagrams, where the postselection $K$ must be decided by (parts of) the larger group of the specific bipartition of interest, c.f. Eq.~\eqref{eq:factor_n}. This largest group always contains $r\geq n-\left \lfloor{n/2}\right \rfloor$ parties and is therefore sufficient to decide the postselection.

\begin{figure}[t]
\begin{minipage}{.45\linewidth}
\begin{tikzpicture}[thick]
\centering 

\node[text height=7pt,text depth=2pt] (A1) {$A_1$};
\node[text height=7pt,text depth=2pt,right=.3cm of A1] (dotsleft) {$\cdots$};
\node[text height=7pt,text depth=2pt,right=.3cm of dotsleft] (Ak) {$A_{\left \lfloor{n/2}\right \rfloor}$};
\node[right=.1cm of Ak] (center) {};
\node[text height=7pt,text depth=2pt,right=.1cm of center] (Akplus1) {$A_{\left \lfloor{n/2}\right \rfloor+1}$};
\node[text height=7pt,text depth=2pt,right=.3cm of Akplus1] (dotsright) {$\cdots$};
\node[text height=7pt,text depth=2pt,right=.3cm of dotsright] (An) {$A_n$};

\node[draw,rectangle,text height=7pt,text depth=2pt,below=1.1cm of center] (lambda) {$\Lambda_{j_0}$};
\node[draw,rectangle,text height=7pt,text depth=2pt,above=.6cm of center] (K) {$K$};
\node[text height=7pt,text depth=2pt,left=2.5cm of K] (label) {$\mathbf{(a)}$};

\node[draw,rectangle,text height=7pt,text depth=2pt,below left = .2cm and -0.1cm of A1] (X1) {$X_1$};
\node[draw,rectangle,text height=7pt,text depth=2pt,below left=.2cm and -0.3cm of Ak] (Xk) {$X_{\left \lfloor{n/2}\right \rfloor}$};
\node[draw,rectangle,text height=7pt,text depth=2pt,below right=.2cm and -0.6cm of Akplus1] (Xkplus1) {$X_{\left \lfloor{n/2}\right \rfloor+1}$};
\node[draw,rectangle,text height=7pt,text depth=2pt,below right=.2cm and -0.1cm of An] (Xn) {$X_n$};

\draw[densely dashed, rounded corners, thin]
($(A1)+(-0.3,0.2)$) rectangle ($(Ak)+(0.5,-0.25)$);
\draw[densely dashed, rounded corners, thin]
($(Akplus1)+(-0.7,0.2)$) rectangle ($(An)+(0.3,-0.25)$);

\draw[->] (X1) edge (A1)
          (Xk) edge (Ak)
          (Xkplus1) edge (Akplus1)
          (Xn) edge (An)
          (lambda) edge[bend right=45] (An)
          (lambda) edge (Akplus1)
          (lambda) edge(Ak)
          (lambda) edge[bend left=45] (A1)
          (An) edge[bend right=15] (K)
          (Akplus1) edge (K)
          (Ak) edge (K)
;

\end{tikzpicture}

\end{minipage}
\hspace{.7cm}
\begin{minipage}{.45\linewidth}
\begin{tikzpicture}[thick]
\centering 

\node[text height=7pt,text depth=2pt] (A1) {$A_1$};
\node[text height=7pt,text depth=2pt,right=.3cm of A1] (dotsleft) {$\cdots$};
\node[text height=7pt,text depth=2pt,right=.3cm of dotsleft] (Ak) {$A_{l}$};
\node[right=.1cm of Ak] (center) {};
\node[text height=7pt,text depth=2pt,right=.1cm of center] (Akplus1) {$A_{l+1}$};
\node[text height=7pt,text depth=2pt,right=.3cm of Akplus1] (dotsright) {$\cdots$};
\node[text height=7pt,text depth=2pt,right=.3cm of dotsright] (An) {$A_n$};

\node[draw,rectangle,text height=7pt,text depth=2pt,below=1.1cm of center] (lambda) {$\Lambda_{j_0}$};
\node[draw,rectangle,text height=7pt,text depth=2pt,above=.6cm of center] (K) {$K$};
\node[text height=7pt,text depth=2pt,left=2.5cm of K] (label) {$\mathbf{(b)}$};

\node[draw,rectangle,text height=7pt,text depth=2pt,below left = .2cm and -0.1cm of A1] (X1) {$X_1$};
\node[draw,rectangle,text height=7pt,text depth=2pt,below left=.2cm and -0.3cm of Ak] (Xk) {$X_{l}$};
\node[draw,rectangle,text height=7pt,text depth=2pt,below right=.2cm and -0.6cm of Akplus1] (Xkplus1) {$X_{l+1}$};
\node[draw,rectangle,text height=7pt,text depth=2pt,below right=.2cm and -0.1cm of An] (Xn) {$X_n$};

\draw[densely dashed, rounded corners, thin]
($(A1)+(-0.3,0.2)$) rectangle ($(Ak)+(0.3,-0.25)$);
\draw[densely dashed, rounded corners, thin]
($(Akplus1)+(-0.4,0.2)$) rectangle ($(An)+(0.3,-0.25)$);

\draw[->] (X1) edge (A1)
          (Xk) edge (Ak)
          (Xkplus1) edge (Akplus1)
          (Xn) edge (An)
          (lambda) edge[bend right=45] (An)
          (lambda) edge (Akplus1)
          (lambda) edge(Ak)
          (lambda) edge[bend left=45] (A1)
          (dotsright) edge[bend right=15] (K)
          (Akplus1) edge (K)
;

\end{tikzpicture}

\end{minipage}
    \caption{(a) A causal diagram showing that, if the postselection $K$ can only decided by $\left \lfloor{n/2}\right \rfloor+1$ parties, Eq.~\eqref{eq:cond2second} cannot be proven because there is an open path through the conditioned collider $K$. (b) The causal diagram that is used to show Eq.~\eqref{eq:cond2second}, contributing to the proof of the factorization condition, Eq.~\eqref{eq:condition_factor_n}, for $l<\left \lfloor{n/2}\right \rfloor$.}
    \label{fig:n_partite2}
\end{figure}
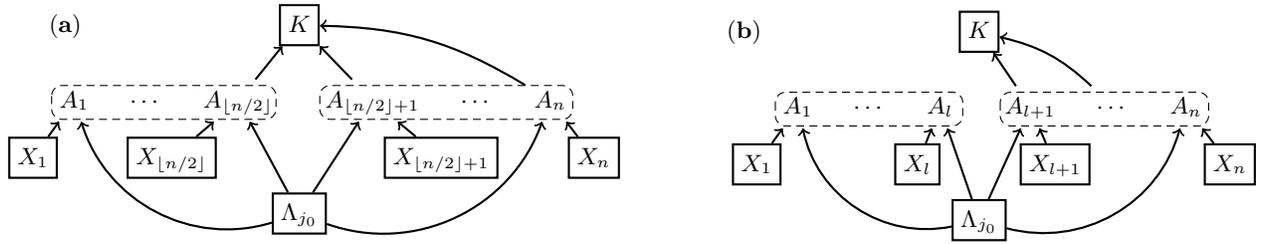


\begin{thebibliography}{59}%
\makeatletter
\providecommand \@ifxundefined [1]{%
 \@ifx{#1\undefined}
}%
\providecommand \@ifnum [1]{%
 \ifnum #1\expandafter \@firstoftwo
 \else \expandafter \@secondoftwo
 \fi
}%
\providecommand \@ifx [1]{%
 \ifx #1\expandafter \@firstoftwo
 \else \expandafter \@secondoftwo
 \fi
}%
\providecommand \natexlab [1]{#1}%
\providecommand \enquote  [1]{``#1''}%
\providecommand \bibnamefont  [1]{#1}%
\providecommand \bibfnamefont [1]{#1}%
\providecommand \citenamefont [1]{#1}%
\providecommand \href@noop [0]{\@secondoftwo}%
\providecommand \href [0]{\begingroup \@sanitize@url \@href}%
\providecommand \@href[1]{\@@startlink{#1}\@@href}%
\providecommand \@@href[1]{\endgroup#1\@@endlink}%
\providecommand \@sanitize@url [0]{\catcode `\\12\catcode `\$12\catcode
  `\&12\catcode `\#12\catcode `\^12\catcode `\_12\catcode `\%12\relax}%
\providecommand \@@startlink[1]{}%
\providecommand \@@endlink[0]{}%
\providecommand \url  [0]{\begingroup\@sanitize@url \@url }%
\providecommand \@url [1]{\endgroup\@href {#1}{\urlprefix }}%
\providecommand \urlprefix  [0]{URL }%
\providecommand \Eprint [0]{\href }%
\providecommand \doibase [0]{http://dx.doi.org/}%
\providecommand \selectlanguage [0]{\@gobble}%
\providecommand \bibinfo  [0]{\@secondoftwo}%
\providecommand \bibfield  [0]{\@secondoftwo}%
\providecommand \translation [1]{[#1]}%
\providecommand \BibitemOpen [0]{}%
\providecommand \bibitemStop [0]{}%
\providecommand \bibitemNoStop [0]{.\EOS\space}%
\providecommand \EOS [0]{\spacefactor3000\relax}%
\providecommand \BibitemShut  [1]{\csname bibitem#1\endcsname}%
\let\auto@bib@innerbib\@empty
\bibitem [{\citenamefont {Bell}(1964)}]{bell1964}%
  \BibitemOpen
  \bibfield  {author} {\bibinfo {author} {\bibfnamefont {J.~S.}\ \bibnamefont
  {Bell}},\ }\href@noop {} {\bibfield  {journal} {\bibinfo  {journal}
  {Physics}\ }\textbf {\bibinfo {volume} {1}},\ \bibinfo {pages} {195}
  (\bibinfo {year} {1964})}\BibitemShut {NoStop}%
\bibitem [{\citenamefont {Bell}(1976)}]{bell1976}%
  \BibitemOpen
  \bibfield  {author} {\bibinfo {author} {\bibfnamefont {J.~S.}\ \bibnamefont
  {Bell}},\ }\href@noop {} {\bibfield  {journal} {\bibinfo  {journal}
  {Epistemological Lett.}\ }\textbf {\bibinfo {volume} {9}},\ \bibinfo {pages}
  {11} (\bibinfo {year} {1976})}\BibitemShut {NoStop}%
\bibitem [{\citenamefont {Brunner}\ \emph {et~al.}(2014)\citenamefont
  {Brunner}, \citenamefont {Cavalcanti}, \citenamefont {Pironio}, \citenamefont
  {Scarani},\ and\ \citenamefont {Wehner}}]{brunner2014}%
  \BibitemOpen
  \bibfield  {author} {\bibinfo {author} {\bibfnamefont {N.}~\bibnamefont
  {Brunner}}, \bibinfo {author} {\bibfnamefont {D.}~\bibnamefont {Cavalcanti}},
  \bibinfo {author} {\bibfnamefont {S.}~\bibnamefont {Pironio}}, \bibinfo
  {author} {\bibfnamefont {V.}~\bibnamefont {Scarani}}, \ and\ \bibinfo
  {author} {\bibfnamefont {S.}~\bibnamefont {Wehner}},\ }\href {\doibase
  10.1103/RevModPhys.86.419} {\bibfield  {journal} {\bibinfo  {journal} {Rev.
  Mod. Phys.}\ }\textbf {\bibinfo {volume} {86}},\ \bibinfo {pages} {419}
  (\bibinfo {year} {2014})}\BibitemShut {NoStop}%
\bibitem [{\citenamefont {Wiseman}(2014)}]{wiseman2014}%
  \BibitemOpen
  \bibfield  {author} {\bibinfo {author} {\bibfnamefont {H.}~\bibnamefont
  {Wiseman}},\ }\href
  {https://www.nature.com/news/physics-bell-s-theorem-still-reverberates-1.15435}
  {\bibfield  {journal} {\bibinfo  {journal} {Nature News}\ }\textbf {\bibinfo
  {volume} {510}},\ \bibinfo {pages} {467} (\bibinfo {year}
  {2014})}\BibitemShut {NoStop}%
\bibitem [{\citenamefont {Scarani}(2019)}]{scarani2019book}%
  \BibitemOpen
  \bibfield  {author} {\bibinfo {author} {\bibfnamefont {V.}~\bibnamefont
  {Scarani}},\ }\href@noop {} {\emph {\bibinfo {title} {Bell nonlocality}}}\
  (\bibinfo  {publisher} {Oxford University Press},\ \bibinfo {year}
  {2019})\BibitemShut {NoStop}%
\bibitem [{\citenamefont {Brukner}\ \emph {et~al.}(2004)\citenamefont
  {Brukner}, \citenamefont {\ifmmode~\dot{Z}\else \.{Z}\fi{}ukowski},
  \citenamefont {Pan},\ and\ \citenamefont {Zeilinger}}]{brukner2004}%
  \BibitemOpen
  \bibfield  {author} {\bibinfo {author} {\bibfnamefont {C.}~\bibnamefont
  {Brukner}}, \bibinfo {author} {\bibfnamefont {M.}~\bibnamefont
  {\ifmmode~\dot{Z}\else \.{Z}\fi{}ukowski}}, \bibinfo {author} {\bibfnamefont
  {J.-W.}\ \bibnamefont {Pan}}, \ and\ \bibinfo {author} {\bibfnamefont
  {A.}~\bibnamefont {Zeilinger}},\ }\href {\doibase
  10.1103/PhysRevLett.92.127901} {\bibfield  {journal} {\bibinfo  {journal}
  {Phys. Rev. Lett.}\ }\textbf {\bibinfo {volume} {92}},\ \bibinfo {pages}
  {127901} (\bibinfo {year} {2004})}\BibitemShut {NoStop}%
\bibitem [{\citenamefont {Buhrman}\ \emph {et~al.}(2010)\citenamefont
  {Buhrman}, \citenamefont {Cleve}, \citenamefont {Massar},\ and\ \citenamefont
  {de~Wolf}}]{buhrman2010}%
  \BibitemOpen
  \bibfield  {author} {\bibinfo {author} {\bibfnamefont {H.}~\bibnamefont
  {Buhrman}}, \bibinfo {author} {\bibfnamefont {R.}~\bibnamefont {Cleve}},
  \bibinfo {author} {\bibfnamefont {S.}~\bibnamefont {Massar}}, \ and\ \bibinfo
  {author} {\bibfnamefont {R.}~\bibnamefont {de~Wolf}},\ }\href {\doibase
  10.1103/RevModPhys.82.665} {\bibfield  {journal} {\bibinfo  {journal} {Rev.
  Mod. Phys.}\ }\textbf {\bibinfo {volume} {82}},\ \bibinfo {pages} {665}
  (\bibinfo {year} {2010})}\BibitemShut {NoStop}%
\bibitem [{\citenamefont {Ekert}(1991)}]{ekert1991}%
  \BibitemOpen
  \bibfield  {author} {\bibinfo {author} {\bibfnamefont {A.~K.}\ \bibnamefont
  {Ekert}},\ }\href {\doibase 10.1103/PhysRevLett.67.661} {\bibfield  {journal}
  {\bibinfo  {journal} {Phys. Rev. Lett.}\ }\textbf {\bibinfo {volume} {67}},\
  \bibinfo {pages} {661} (\bibinfo {year} {1991})}\BibitemShut {NoStop}%
\bibitem [{\citenamefont {Barrett}\ \emph {et~al.}(2005)\citenamefont
  {Barrett}, \citenamefont {Hardy},\ and\ \citenamefont {Kent}}]{barrett2005}%
  \BibitemOpen
  \bibfield  {author} {\bibinfo {author} {\bibfnamefont {J.}~\bibnamefont
  {Barrett}}, \bibinfo {author} {\bibfnamefont {L.}~\bibnamefont {Hardy}}, \
  and\ \bibinfo {author} {\bibfnamefont {A.}~\bibnamefont {Kent}},\ }\href
  {\doibase 10.1103/PhysRevLett.95.010503} {\bibfield  {journal} {\bibinfo
  {journal} {Phys. Rev. Lett.}\ }\textbf {\bibinfo {volume} {95}},\ \bibinfo
  {pages} {010503} (\bibinfo {year} {2005})}\BibitemShut {NoStop}%
\bibitem [{\citenamefont {Ac\'{\i}n}\ \emph {et~al.}(2007)\citenamefont
  {Ac\'{\i}n}, \citenamefont {Brunner}, \citenamefont {Gisin}, \citenamefont
  {Massar}, \citenamefont {Pironio},\ and\ \citenamefont {Scarani}}]{acin2007}%
  \BibitemOpen
  \bibfield  {author} {\bibinfo {author} {\bibfnamefont {A.}~\bibnamefont
  {Ac\'{\i}n}}, \bibinfo {author} {\bibfnamefont {N.}~\bibnamefont {Brunner}},
  \bibinfo {author} {\bibfnamefont {N.}~\bibnamefont {Gisin}}, \bibinfo
  {author} {\bibfnamefont {S.}~\bibnamefont {Massar}}, \bibinfo {author}
  {\bibfnamefont {S.}~\bibnamefont {Pironio}}, \ and\ \bibinfo {author}
  {\bibfnamefont {V.}~\bibnamefont {Scarani}},\ }\href {\doibase
  10.1103/PhysRevLett.98.230501} {\bibfield  {journal} {\bibinfo  {journal}
  {Phys. Rev. Lett.}\ }\textbf {\bibinfo {volume} {98}},\ \bibinfo {pages}
  {230501} (\bibinfo {year} {2007})}\BibitemShut {NoStop}%
\bibitem [{\citenamefont {Masanes}\ \emph {et~al.}(2011)\citenamefont
  {Masanes}, \citenamefont {Pironio},\ and\ \citenamefont
  {Ac{\'\i}n}}]{masanes2011}%
  \BibitemOpen
  \bibfield  {author} {\bibinfo {author} {\bibfnamefont {L.}~\bibnamefont
  {Masanes}}, \bibinfo {author} {\bibfnamefont {S.}~\bibnamefont {Pironio}}, \
  and\ \bibinfo {author} {\bibfnamefont {A.}~\bibnamefont {Ac{\'\i}n}},\ }\href
  {https://www.nature.com/articles/ncomms1244} {\bibfield  {journal} {\bibinfo
  {journal} {Nat. Comm.}\ }\textbf {\bibinfo {volume} {2}},\ \bibinfo {pages}
  {1} (\bibinfo {year} {2011})}\BibitemShut {NoStop}%
\bibitem [{\citenamefont {Ekert}\ and\ \citenamefont
  {Renner}(2014)}]{ekert2014}%
  \BibitemOpen
  \bibfield  {author} {\bibinfo {author} {\bibfnamefont {A.}~\bibnamefont
  {Ekert}}\ and\ \bibinfo {author} {\bibfnamefont {R.}~\bibnamefont {Renner}},\
  }\href {https://www.nature.com/articles/nature13132} {\bibfield  {journal}
  {\bibinfo  {journal} {Nature}\ }\textbf {\bibinfo {volume} {507}},\ \bibinfo
  {pages} {443} (\bibinfo {year} {2014})}\BibitemShut {NoStop}%
\bibitem [{\citenamefont {Pironio}\ \emph {et~al.}(2010)\citenamefont
  {Pironio}, \citenamefont {Ac{\'\i}n}, \citenamefont {Massar}, \citenamefont
  {de~La~Giroday}, \citenamefont {Matsukevich}, \citenamefont {Maunz},
  \citenamefont {Olmschenk}, \citenamefont {Hayes}, \citenamefont {Luo},
  \citenamefont {Manning} \emph {et~al.}}]{pironio2010}%
  \BibitemOpen
  \bibfield  {author} {\bibinfo {author} {\bibfnamefont {S.}~\bibnamefont
  {Pironio}}, \bibinfo {author} {\bibfnamefont {A.}~\bibnamefont {Ac{\'\i}n}},
  \bibinfo {author} {\bibfnamefont {S.}~\bibnamefont {Massar}}, \bibinfo
  {author} {\bibfnamefont {A.~B.}\ \bibnamefont {de~La~Giroday}}, \bibinfo
  {author} {\bibfnamefont {D.~N.}\ \bibnamefont {Matsukevich}}, \bibinfo
  {author} {\bibfnamefont {P.}~\bibnamefont {Maunz}}, \bibinfo {author}
  {\bibfnamefont {S.}~\bibnamefont {Olmschenk}}, \bibinfo {author}
  {\bibfnamefont {D.}~\bibnamefont {Hayes}}, \bibinfo {author} {\bibfnamefont
  {L.}~\bibnamefont {Luo}}, \bibinfo {author} {\bibfnamefont {T.~A.}\
  \bibnamefont {Manning}},  \emph {et~al.},\ }\href
  {https://www.nature.com/articles/nature09008} {\bibfield  {journal} {\bibinfo
   {journal} {Nature}\ }\textbf {\bibinfo {volume} {464}},\ \bibinfo {pages}
  {1021} (\bibinfo {year} {2010})}\BibitemShut {NoStop}%
\bibitem [{\citenamefont {Colbeck}\ and\ \citenamefont
  {Kent}(2011)}]{colbeck2011}%
  \BibitemOpen
  \bibfield  {author} {\bibinfo {author} {\bibfnamefont {R.}~\bibnamefont
  {Colbeck}}\ and\ \bibinfo {author} {\bibfnamefont {A.}~\bibnamefont {Kent}},\
  }\href {\doibase 10.1088/1751-8113/44/9/095305} {\bibfield  {journal}
  {\bibinfo  {journal} {Journal of Physics A: Mathematical and Theoretical}\
  }\textbf {\bibinfo {volume} {44}},\ \bibinfo {pages} {095305} (\bibinfo
  {year} {2011})}\BibitemShut {NoStop}%
\bibitem [{\citenamefont {Raussendorf}\ and\ \citenamefont
  {Briegel}(2001)}]{raussendorf2001}%
  \BibitemOpen
  \bibfield  {author} {\bibinfo {author} {\bibfnamefont {R.}~\bibnamefont
  {Raussendorf}}\ and\ \bibinfo {author} {\bibfnamefont {H.~J.}\ \bibnamefont
  {Briegel}},\ }\href {\doibase 10.1103/PhysRevLett.86.5188} {\bibfield
  {journal} {\bibinfo  {journal} {Phys. Rev. Lett.}\ }\textbf {\bibinfo
  {volume} {86}},\ \bibinfo {pages} {5188} (\bibinfo {year}
  {2001})}\BibitemShut {NoStop}%
\bibitem [{\citenamefont {Raussendorf}\ \emph {et~al.}(2003)\citenamefont
  {Raussendorf}, \citenamefont {Browne},\ and\ \citenamefont
  {Briegel}}]{raussendorf2003}%
  \BibitemOpen
  \bibfield  {author} {\bibinfo {author} {\bibfnamefont {R.}~\bibnamefont
  {Raussendorf}}, \bibinfo {author} {\bibfnamefont {D.~E.}\ \bibnamefont
  {Browne}}, \ and\ \bibinfo {author} {\bibfnamefont {H.~J.}\ \bibnamefont
  {Briegel}},\ }\href {\doibase 10.1103/PhysRevA.68.022312} {\bibfield
  {journal} {\bibinfo  {journal} {Phys. Rev. A}\ }\textbf {\bibinfo {volume}
  {68}},\ \bibinfo {pages} {022312} (\bibinfo {year} {2003})}\BibitemShut
  {NoStop}%
\bibitem [{\citenamefont {Hillery}\ \emph {et~al.}(1999)\citenamefont
  {Hillery}, \citenamefont {Bu\ifmmode~\check{z}\else \v{z}\fi{}ek},\ and\
  \citenamefont {Berthiaume}}]{hillery1999}%
  \BibitemOpen
  \bibfield  {author} {\bibinfo {author} {\bibfnamefont {M.}~\bibnamefont
  {Hillery}}, \bibinfo {author} {\bibfnamefont {V.}~\bibnamefont
  {Bu\ifmmode~\check{z}\else \v{z}\fi{}ek}}, \ and\ \bibinfo {author}
  {\bibfnamefont {A.}~\bibnamefont {Berthiaume}},\ }\href {\doibase
  10.1103/PhysRevA.59.1829} {\bibfield  {journal} {\bibinfo  {journal} {Phys.
  Rev. A}\ }\textbf {\bibinfo {volume} {59}},\ \bibinfo {pages} {1829}
  (\bibinfo {year} {1999})}\BibitemShut {NoStop}%
\bibitem [{\citenamefont {Svetlichny}(1987)}]{svetlichny1987}%
  \BibitemOpen
  \bibfield  {author} {\bibinfo {author} {\bibfnamefont {G.}~\bibnamefont
  {Svetlichny}},\ }\href {\doibase 10.1103/PhysRevD.35.3066} {\bibfield
  {journal} {\bibinfo  {journal} {Phys. Rev. D}\ }\textbf {\bibinfo {volume}
  {35}},\ \bibinfo {pages} {3066} (\bibinfo {year} {1987})}\BibitemShut
  {NoStop}%
\bibitem [{\citenamefont {Bancal}\ \emph {et~al.}(2009)\citenamefont {Bancal},
  \citenamefont {Branciard}, \citenamefont {Gisin},\ and\ \citenamefont
  {Pironio}}]{bancal2009}%
  \BibitemOpen
  \bibfield  {author} {\bibinfo {author} {\bibfnamefont {J.-D.}\ \bibnamefont
  {Bancal}}, \bibinfo {author} {\bibfnamefont {C.}~\bibnamefont {Branciard}},
  \bibinfo {author} {\bibfnamefont {N.}~\bibnamefont {Gisin}}, \ and\ \bibinfo
  {author} {\bibfnamefont {S.}~\bibnamefont {Pironio}},\ }\href {\doibase
  10.1103/PhysRevLett.103.090503} {\bibfield  {journal} {\bibinfo  {journal}
  {Phys. Rev. Lett.}\ }\textbf {\bibinfo {volume} {103}},\ \bibinfo {pages}
  {090503} (\bibinfo {year} {2009})}\BibitemShut {NoStop}%
\bibitem [{\citenamefont {Bancal}\ \emph {et~al.}(2013)\citenamefont {Bancal},
  \citenamefont {Barrett}, \citenamefont {Gisin},\ and\ \citenamefont
  {Pironio}}]{bancal2013}%
  \BibitemOpen
  \bibfield  {author} {\bibinfo {author} {\bibfnamefont {J.-D.}\ \bibnamefont
  {Bancal}}, \bibinfo {author} {\bibfnamefont {J.}~\bibnamefont {Barrett}},
  \bibinfo {author} {\bibfnamefont {N.}~\bibnamefont {Gisin}}, \ and\ \bibinfo
  {author} {\bibfnamefont {S.}~\bibnamefont {Pironio}},\ }\href {\doibase
  10.1103/PhysRevA.88.014102} {\bibfield  {journal} {\bibinfo  {journal} {Phys.
  Rev. A}\ }\textbf {\bibinfo {volume} {88}},\ \bibinfo {pages} {014102}
  (\bibinfo {year} {2013})}\BibitemShut {NoStop}%
\bibitem [{\citenamefont {Epping}\ \emph {et~al.}(2017)\citenamefont {Epping},
  \citenamefont {Kampermann}, \citenamefont {Macchiavello},\ and\ \citenamefont
  {Bru{\ss}}}]{epping2017}%
  \BibitemOpen
  \bibfield  {author} {\bibinfo {author} {\bibfnamefont {M.}~\bibnamefont
  {Epping}}, \bibinfo {author} {\bibfnamefont {H.}~\bibnamefont {Kampermann}},
  \bibinfo {author} {\bibfnamefont {C.}~\bibnamefont {Macchiavello}}, \ and\
  \bibinfo {author} {\bibfnamefont {D.}~\bibnamefont {Bru{\ss}}},\ }\href
  {\doibase 10.1088/1367-2630/aa8487} {\bibfield  {journal} {\bibinfo
  {journal} {New J. Phys.}\ }\textbf {\bibinfo {volume} {19}},\ \bibinfo
  {pages} {093012} (\bibinfo {year} {2017})}\BibitemShut {NoStop}%
\bibitem [{\citenamefont {Pivoluska}\ \emph {et~al.}(2018)\citenamefont
  {Pivoluska}, \citenamefont {Huber},\ and\ \citenamefont
  {Malik}}]{pivoluska2018}%
  \BibitemOpen
  \bibfield  {author} {\bibinfo {author} {\bibfnamefont {M.}~\bibnamefont
  {Pivoluska}}, \bibinfo {author} {\bibfnamefont {M.}~\bibnamefont {Huber}}, \
  and\ \bibinfo {author} {\bibfnamefont {M.}~\bibnamefont {Malik}},\ }\href
  {\doibase 10.1103/PhysRevA.97.032312} {\bibfield  {journal} {\bibinfo
  {journal} {Phys. Rev. A}\ }\textbf {\bibinfo {volume} {97}},\ \bibinfo
  {pages} {032312} (\bibinfo {year} {2018})}\BibitemShut {NoStop}%
\bibitem [{\citenamefont {Ribeiro}\ \emph {et~al.}(2018)\citenamefont
  {Ribeiro}, \citenamefont {Murta},\ and\ \citenamefont
  {Wehner}}]{ribeiro2018}%
  \BibitemOpen
  \bibfield  {author} {\bibinfo {author} {\bibfnamefont {J.}~\bibnamefont
  {Ribeiro}}, \bibinfo {author} {\bibfnamefont {G.}~\bibnamefont {Murta}}, \
  and\ \bibinfo {author} {\bibfnamefont {S.}~\bibnamefont {Wehner}},\ }\href
  {\doibase 10.1103/PhysRevA.97.022307} {\bibfield  {journal} {\bibinfo
  {journal} {Phys. Rev. A}\ }\textbf {\bibinfo {volume} {97}},\ \bibinfo
  {pages} {022307} (\bibinfo {year} {2018})}\BibitemShut {NoStop}%
\bibitem [{\citenamefont {Kimble}(2008)}]{kimble2008}%
  \BibitemOpen
  \bibfield  {author} {\bibinfo {author} {\bibfnamefont {H.~J.}\ \bibnamefont
  {Kimble}},\ }\href {\doibase 10.1038/nature07127} {\bibfield  {journal}
  {\bibinfo  {journal} {Nature}\ }\textbf {\bibinfo {volume} {453}},\ \bibinfo
  {pages} {1023} (\bibinfo {year} {2008})}\BibitemShut {NoStop}%
\bibitem [{\citenamefont {Wehner}\ \emph {et~al.}(2018)\citenamefont {Wehner},
  \citenamefont {Elkouss},\ and\ \citenamefont {Hanson}}]{wehner2018}%
  \BibitemOpen
  \bibfield  {author} {\bibinfo {author} {\bibfnamefont {S.}~\bibnamefont
  {Wehner}}, \bibinfo {author} {\bibfnamefont {D.}~\bibnamefont {Elkouss}}, \
  and\ \bibinfo {author} {\bibfnamefont {R.}~\bibnamefont {Hanson}},\ }\href
  {\doibase 10.1126/science.aam9288} {\bibfield  {journal} {\bibinfo  {journal}
  {Science}\ }\textbf {\bibinfo {volume} {362}},\ \bibinfo {pages} {303}
  (\bibinfo {year} {2018})}\BibitemShut {NoStop}%
\bibitem [{\citenamefont {Murta}\ \emph {et~al.}(2020)\citenamefont {Murta},
  \citenamefont {Grasselli}, \citenamefont {Kampermann},\ and\ \citenamefont
  {Bruß}}]{murta2020}%
  \BibitemOpen
  \bibfield  {author} {\bibinfo {author} {\bibfnamefont {G.}~\bibnamefont
  {Murta}}, \bibinfo {author} {\bibfnamefont {F.}~\bibnamefont {Grasselli}},
  \bibinfo {author} {\bibfnamefont {H.}~\bibnamefont {Kampermann}}, \ and\
  \bibinfo {author} {\bibfnamefont {D.}~\bibnamefont {Bruß}},\ }\href
  {\doibase https://doi.org/10.1002/qute.202000025} {\bibfield  {journal}
  {\bibinfo  {journal} {Advanced Quantum Technologies}\ }\textbf {\bibinfo
  {volume} {3}},\ \bibinfo {pages} {2000025} (\bibinfo {year}
  {2020})}\BibitemShut {NoStop}%
\bibitem [{\citenamefont {Werner}(1989)}]{werner1989}%
  \BibitemOpen
  \bibfield  {author} {\bibinfo {author} {\bibfnamefont {R.~F.}\ \bibnamefont
  {Werner}},\ }\href {\doibase 10.1103/PhysRevA.40.4277} {\bibfield  {journal}
  {\bibinfo  {journal} {Phys. Rev. A}\ }\textbf {\bibinfo {volume} {40}},\
  \bibinfo {pages} {4277} (\bibinfo {year} {1989})}\BibitemShut {NoStop}%
\bibitem [{\citenamefont {Carvalho}\ \emph {et~al.}(2004)\citenamefont
  {Carvalho}, \citenamefont {Mintert},\ and\ \citenamefont
  {Buchleitner}}]{carvalho2004}%
  \BibitemOpen
  \bibfield  {author} {\bibinfo {author} {\bibfnamefont {A.~R.~R.}\
  \bibnamefont {Carvalho}}, \bibinfo {author} {\bibfnamefont {F.}~\bibnamefont
  {Mintert}}, \ and\ \bibinfo {author} {\bibfnamefont {A.}~\bibnamefont
  {Buchleitner}},\ }\href {\doibase 10.1103/PhysRevLett.93.230501} {\bibfield
  {journal} {\bibinfo  {journal} {Phys. Rev. Lett.}\ }\textbf {\bibinfo
  {volume} {93}},\ \bibinfo {pages} {230501} (\bibinfo {year}
  {2004})}\BibitemShut {NoStop}%
\bibitem [{\citenamefont {Horodecki}\ \emph {et~al.}(2009)\citenamefont
  {Horodecki}, \citenamefont {Horodecki}, \citenamefont {Horodecki},\ and\
  \citenamefont {Horodecki}}]{horodecki2009}%
  \BibitemOpen
  \bibfield  {author} {\bibinfo {author} {\bibfnamefont {R.}~\bibnamefont
  {Horodecki}}, \bibinfo {author} {\bibfnamefont {P.}~\bibnamefont
  {Horodecki}}, \bibinfo {author} {\bibfnamefont {M.}~\bibnamefont
  {Horodecki}}, \ and\ \bibinfo {author} {\bibfnamefont {K.}~\bibnamefont
  {Horodecki}},\ }\href {\doibase 10.1103/RevModPhys.81.865} {\bibfield
  {journal} {\bibinfo  {journal} {Rev. Mod. Phys.}\ }\textbf {\bibinfo {volume}
  {81}},\ \bibinfo {pages} {865} (\bibinfo {year} {2009})}\BibitemShut
  {NoStop}%
\bibitem [{\citenamefont {Pezz\'e}\ and\ \citenamefont
  {Smerzi}(2009)}]{pezze2009}%
  \BibitemOpen
  \bibfield  {author} {\bibinfo {author} {\bibfnamefont {L.}~\bibnamefont
  {Pezz\'e}}\ and\ \bibinfo {author} {\bibfnamefont {A.}~\bibnamefont
  {Smerzi}},\ }\href {\doibase 10.1103/PhysRevLett.102.100401} {\bibfield
  {journal} {\bibinfo  {journal} {Phys. Rev. Lett.}\ }\textbf {\bibinfo
  {volume} {102}},\ \bibinfo {pages} {100401} (\bibinfo {year}
  {2009})}\BibitemShut {NoStop}%
\bibitem [{\citenamefont {Hyllus}\ \emph {et~al.}(2012)\citenamefont {Hyllus},
  \citenamefont {Laskowski}, \citenamefont {Krischek}, \citenamefont
  {Schwemmer}, \citenamefont {Wieczorek}, \citenamefont {Weinfurter},
  \citenamefont {Pezz\'e},\ and\ \citenamefont {Smerzi}}]{hyllus2012}%
  \BibitemOpen
  \bibfield  {author} {\bibinfo {author} {\bibfnamefont {P.}~\bibnamefont
  {Hyllus}}, \bibinfo {author} {\bibfnamefont {W.}~\bibnamefont {Laskowski}},
  \bibinfo {author} {\bibfnamefont {R.}~\bibnamefont {Krischek}}, \bibinfo
  {author} {\bibfnamefont {C.}~\bibnamefont {Schwemmer}}, \bibinfo {author}
  {\bibfnamefont {W.}~\bibnamefont {Wieczorek}}, \bibinfo {author}
  {\bibfnamefont {H.}~\bibnamefont {Weinfurter}}, \bibinfo {author}
  {\bibfnamefont {L.}~\bibnamefont {Pezz\'e}}, \ and\ \bibinfo {author}
  {\bibfnamefont {A.}~\bibnamefont {Smerzi}},\ }\href {\doibase
  10.1103/PhysRevA.85.022321} {\bibfield  {journal} {\bibinfo  {journal} {Phys.
  Rev. A}\ }\textbf {\bibinfo {volume} {85}},\ \bibinfo {pages} {022321}
  (\bibinfo {year} {2012})}\BibitemShut {NoStop}%
\bibitem [{\citenamefont {T\'oth}(2012)}]{toth2012}%
  \BibitemOpen
  \bibfield  {author} {\bibinfo {author} {\bibfnamefont {G.}~\bibnamefont
  {T\'oth}},\ }\href {\doibase 10.1103/PhysRevA.85.022322} {\bibfield
  {journal} {\bibinfo  {journal} {Phys. Rev. A}\ }\textbf {\bibinfo {volume}
  {85}},\ \bibinfo {pages} {022322} (\bibinfo {year} {2012})}\BibitemShut
  {NoStop}%
\bibitem [{\citenamefont {Pezz\`e}\ \emph {et~al.}(2018)\citenamefont
  {Pezz\`e}, \citenamefont {Smerzi}, \citenamefont {Oberthaler}, \citenamefont
  {Schmied},\ and\ \citenamefont {Treutlein}}]{pezze2018}%
  \BibitemOpen
  \bibfield  {author} {\bibinfo {author} {\bibfnamefont {L.}~\bibnamefont
  {Pezz\`e}}, \bibinfo {author} {\bibfnamefont {A.}~\bibnamefont {Smerzi}},
  \bibinfo {author} {\bibfnamefont {M.~K.}\ \bibnamefont {Oberthaler}},
  \bibinfo {author} {\bibfnamefont {R.}~\bibnamefont {Schmied}}, \ and\
  \bibinfo {author} {\bibfnamefont {P.}~\bibnamefont {Treutlein}},\ }\href
  {\doibase 10.1103/RevModPhys.90.035005} {\bibfield  {journal} {\bibinfo
  {journal} {Rev. Mod. Phys.}\ }\textbf {\bibinfo {volume} {90}},\ \bibinfo
  {pages} {035005} (\bibinfo {year} {2018})}\BibitemShut {NoStop}%
\bibitem [{\citenamefont {Szalay}(2019)}]{szalay2019}%
  \BibitemOpen
  \bibfield  {author} {\bibinfo {author} {\bibfnamefont {S.}~\bibnamefont
  {Szalay}},\ }\href {\doibase 10.22331/q-2019-12-02-204} {\bibfield  {journal}
  {\bibinfo  {journal} {{Quantum}}\ }\textbf {\bibinfo {volume} {3}},\ \bibinfo
  {pages} {204} (\bibinfo {year} {2019})}\BibitemShut {NoStop}%
\bibitem [{\citenamefont {Ren}\ \emph {et~al.}(2021)\citenamefont {Ren},
  \citenamefont {Li}, \citenamefont {Smerzi},\ and\ \citenamefont
  {Gessner}}]{ren2021}%
  \BibitemOpen
  \bibfield  {author} {\bibinfo {author} {\bibfnamefont {Z.}~\bibnamefont
  {Ren}}, \bibinfo {author} {\bibfnamefont {W.}~\bibnamefont {Li}}, \bibinfo
  {author} {\bibfnamefont {A.}~\bibnamefont {Smerzi}}, \ and\ \bibinfo {author}
  {\bibfnamefont {M.}~\bibnamefont {Gessner}},\ }\href {\doibase
  10.1103/PhysRevLett.126.080502} {\bibfield  {journal} {\bibinfo  {journal}
  {Phys. Rev. Lett.}\ }\textbf {\bibinfo {volume} {126}},\ \bibinfo {pages}
  {080502} (\bibinfo {year} {2021})}\BibitemShut {NoStop}%
\bibitem [{\citenamefont {Clauser}\ \emph {et~al.}(1969)\citenamefont
  {Clauser}, \citenamefont {Horne}, \citenamefont {Shimony},\ and\
  \citenamefont {Holt}}]{clauser1969}%
  \BibitemOpen
  \bibfield  {author} {\bibinfo {author} {\bibfnamefont {J.~F.}\ \bibnamefont
  {Clauser}}, \bibinfo {author} {\bibfnamefont {M.~A.}\ \bibnamefont {Horne}},
  \bibinfo {author} {\bibfnamefont {A.}~\bibnamefont {Shimony}}, \ and\
  \bibinfo {author} {\bibfnamefont {R.~A.}\ \bibnamefont {Holt}},\ }\href
  {\doibase 10.1103/PhysRevLett.23.880} {\bibfield  {journal} {\bibinfo
  {journal} {Phys. Rev. Lett.}\ }\textbf {\bibinfo {volume} {23}},\ \bibinfo
  {pages} {880} (\bibinfo {year} {1969})}\BibitemShut {NoStop}%
\bibitem [{\citenamefont {Sciarrino}\ \emph {et~al.}(2011)\citenamefont
  {Sciarrino}, \citenamefont {Vallone}, \citenamefont {Cabello},\ and\
  \citenamefont {Mataloni}}]{sciarrino2011}%
  \BibitemOpen
  \bibfield  {author} {\bibinfo {author} {\bibfnamefont {F.}~\bibnamefont
  {Sciarrino}}, \bibinfo {author} {\bibfnamefont {G.}~\bibnamefont {Vallone}},
  \bibinfo {author} {\bibfnamefont {A.}~\bibnamefont {Cabello}}, \ and\
  \bibinfo {author} {\bibfnamefont {P.}~\bibnamefont {Mataloni}},\ }\href
  {\doibase 10.1103/PhysRevA.83.032112} {\bibfield  {journal} {\bibinfo
  {journal} {Phys. Rev. A}\ }\textbf {\bibinfo {volume} {83}},\ \bibinfo
  {pages} {032112} (\bibinfo {year} {2011})}\BibitemShut {NoStop}%
\bibitem [{\citenamefont {Pearl}\ \emph {et~al.}(2016)\citenamefont {Pearl},
  \citenamefont {Glymour},\ and\ \citenamefont {Jewell}}]{pearl2016}%
  \BibitemOpen
  \bibfield  {author} {\bibinfo {author} {\bibfnamefont {J.}~\bibnamefont
  {Pearl}}, \bibinfo {author} {\bibfnamefont {M.}~\bibnamefont {Glymour}}, \
  and\ \bibinfo {author} {\bibfnamefont {N.~P.}\ \bibnamefont {Jewell}},\
  }\href@noop {} {\emph {\bibinfo {title} {Causal inference in statistics: A
  primer}}}\ (\bibinfo  {publisher} {John Wiley \& Sons},\ \bibinfo {year}
  {2016})\BibitemShut {NoStop}%
\bibitem [{\citenamefont {Cabello}\ \emph {et~al.}(2009)\citenamefont
  {Cabello}, \citenamefont {Rossi}, \citenamefont {Vallone}, \citenamefont
  {De~Martini},\ and\ \citenamefont {Mataloni}}]{cabello2009}%
  \BibitemOpen
  \bibfield  {author} {\bibinfo {author} {\bibfnamefont {A.}~\bibnamefont
  {Cabello}}, \bibinfo {author} {\bibfnamefont {A.}~\bibnamefont {Rossi}},
  \bibinfo {author} {\bibfnamefont {G.}~\bibnamefont {Vallone}}, \bibinfo
  {author} {\bibfnamefont {F.}~\bibnamefont {De~Martini}}, \ and\ \bibinfo
  {author} {\bibfnamefont {P.}~\bibnamefont {Mataloni}},\ }\href {\doibase
  10.1103/PhysRevLett.102.040401} {\bibfield  {journal} {\bibinfo  {journal}
  {Phys. Rev. Lett.}\ }\textbf {\bibinfo {volume} {102}},\ \bibinfo {pages}
  {040401} (\bibinfo {year} {2009})}\BibitemShut {NoStop}%
\bibitem [{\citenamefont {Aerts}\ \emph {et~al.}(1999)\citenamefont {Aerts},
  \citenamefont {Kwiat}, \citenamefont {Larsson},\ and\ \citenamefont
  {Zukowski}}]{aerts1999}%
  \BibitemOpen
  \bibfield  {author} {\bibinfo {author} {\bibfnamefont {S.}~\bibnamefont
  {Aerts}}, \bibinfo {author} {\bibfnamefont {P.}~\bibnamefont {Kwiat}},
  \bibinfo {author} {\bibfnamefont {J.-A.}\ \bibnamefont {Larsson}}, \ and\
  \bibinfo {author} {\bibfnamefont {M.}~\bibnamefont {Zukowski}},\ }\href
  {\doibase 10.1103/PhysRevLett.83.2872} {\bibfield  {journal} {\bibinfo
  {journal} {Phys. Rev. Lett.}\ }\textbf {\bibinfo {volume} {83}},\ \bibinfo
  {pages} {2872} (\bibinfo {year} {1999})}\BibitemShut {NoStop}%
\bibitem [{\citenamefont {De~Caro}\ and\ \citenamefont
  {Garuccio}(1994)}]{decaro1994}%
  \BibitemOpen
  \bibfield  {author} {\bibinfo {author} {\bibfnamefont {L.}~\bibnamefont
  {De~Caro}}\ and\ \bibinfo {author} {\bibfnamefont {A.}~\bibnamefont
  {Garuccio}},\ }\href {\doibase 10.1103/PhysRevA.50.R2803} {\bibfield
  {journal} {\bibinfo  {journal} {Phys. Rev. A}\ }\textbf {\bibinfo {volume}
  {50}},\ \bibinfo {pages} {R2803} (\bibinfo {year} {1994})}\BibitemShut
  {NoStop}%
\bibitem [{\citenamefont {Lima}\ \emph {et~al.}(2010)\citenamefont {Lima},
  \citenamefont {Vallone}, \citenamefont {Chiuri}, \citenamefont {Cabello},\
  and\ \citenamefont {Mataloni}}]{lima2010}%
  \BibitemOpen
  \bibfield  {author} {\bibinfo {author} {\bibfnamefont {G.}~\bibnamefont
  {Lima}}, \bibinfo {author} {\bibfnamefont {G.}~\bibnamefont {Vallone}},
  \bibinfo {author} {\bibfnamefont {A.}~\bibnamefont {Chiuri}}, \bibinfo
  {author} {\bibfnamefont {A.}~\bibnamefont {Cabello}}, \ and\ \bibinfo
  {author} {\bibfnamefont {P.}~\bibnamefont {Mataloni}},\ }\href {\doibase
  10.1103/PhysRevA.81.040101} {\bibfield  {journal} {\bibinfo  {journal} {Phys.
  Rev. A}\ }\textbf {\bibinfo {volume} {81}},\ \bibinfo {pages} {040101(R)}
  (\bibinfo {year} {2010})}\BibitemShut {NoStop}%
\bibitem [{\citenamefont {Yurke}\ and\ \citenamefont
  {Stoler}(1992{\natexlab{a}})}]{yurke1992a}%
  \BibitemOpen
  \bibfield  {author} {\bibinfo {author} {\bibfnamefont {B.}~\bibnamefont
  {Yurke}}\ and\ \bibinfo {author} {\bibfnamefont {D.}~\bibnamefont {Stoler}},\
  }\href {\doibase 10.1103/PhysRevLett.68.1251} {\bibfield  {journal} {\bibinfo
   {journal} {Phys. Rev. Lett.}\ }\textbf {\bibinfo {volume} {68}},\ \bibinfo
  {pages} {1251} (\bibinfo {year} {1992}{\natexlab{a}})}\BibitemShut {NoStop}%
\bibitem [{\citenamefont {Einstein}\ \emph {et~al.}(1935)\citenamefont
  {Einstein}, \citenamefont {Podolsky},\ and\ \citenamefont
  {Rosen}}]{einstein1935}%
  \BibitemOpen
  \bibfield  {author} {\bibinfo {author} {\bibfnamefont {A.}~\bibnamefont
  {Einstein}}, \bibinfo {author} {\bibfnamefont {B.}~\bibnamefont {Podolsky}},
  \ and\ \bibinfo {author} {\bibfnamefont {N.}~\bibnamefont {Rosen}},\ }\href
  {\doibase 10.1103/PhysRev.47.777} {\bibfield  {journal} {\bibinfo  {journal}
  {Phys. Rev.}\ }\textbf {\bibinfo {volume} {47}},\ \bibinfo {pages} {777}
  (\bibinfo {year} {1935})}\BibitemShut {NoStop}%
\bibitem [{\citenamefont {Greenberger}\ \emph {et~al.}(1990)\citenamefont
  {Greenberger}, \citenamefont {Horne}, \citenamefont {Shimony},\ and\
  \citenamefont {Zeilinger}}]{greenberger1990}%
  \BibitemOpen
  \bibfield  {author} {\bibinfo {author} {\bibfnamefont {D.~M.}\ \bibnamefont
  {Greenberger}}, \bibinfo {author} {\bibfnamefont {M.~A.}\ \bibnamefont
  {Horne}}, \bibinfo {author} {\bibfnamefont {A.}~\bibnamefont {Shimony}}, \
  and\ \bibinfo {author} {\bibfnamefont {A.}~\bibnamefont {Zeilinger}},\ }\href
  {\doibase 10.1119/1.16243} {\bibfield  {journal} {\bibinfo  {journal}
  {American Journal of Physics}\ }\textbf {\bibinfo {volume} {58}},\ \bibinfo
  {pages} {1131} (\bibinfo {year} {1990})}\BibitemShut {NoStop}%
\bibitem [{\citenamefont {Mermin}(1990)}]{mermin1990}%
  \BibitemOpen
  \bibfield  {author} {\bibinfo {author} {\bibfnamefont {N.~D.}\ \bibnamefont
  {Mermin}},\ }\href {\doibase 10.1119/1.16503} {\bibfield  {journal} {\bibinfo
   {journal} {American Journal of Physics}\ }\textbf {\bibinfo {volume} {58}},\
  \bibinfo {pages} {731} (\bibinfo {year} {1990})}\BibitemShut {NoStop}%
\bibitem [{\citenamefont {Hong}\ \emph {et~al.}(1987)\citenamefont {Hong},
  \citenamefont {Ou},\ and\ \citenamefont {Mandel}}]{hong1987}%
  \BibitemOpen
  \bibfield  {author} {\bibinfo {author} {\bibfnamefont {C.~K.}\ \bibnamefont
  {Hong}}, \bibinfo {author} {\bibfnamefont {Z.~Y.}\ \bibnamefont {Ou}}, \ and\
  \bibinfo {author} {\bibfnamefont {L.}~\bibnamefont {Mandel}},\ }\href
  {\doibase 10.1103/PhysRevLett.59.2044} {\bibfield  {journal} {\bibinfo
  {journal} {Phys. Rev. Lett.}\ }\textbf {\bibinfo {volume} {59}},\ \bibinfo
  {pages} {2044} (\bibinfo {year} {1987})}\BibitemShut {NoStop}%
\bibitem [{\citenamefont {Yurke}\ and\ \citenamefont
  {Stoler}(1992{\natexlab{b}})}]{yurke1992b}%
  \BibitemOpen
  \bibfield  {author} {\bibinfo {author} {\bibfnamefont {B.}~\bibnamefont
  {Yurke}}\ and\ \bibinfo {author} {\bibfnamefont {D.}~\bibnamefont {Stoler}},\
  }\href {\doibase 10.1103/PhysRevA.46.2229} {\bibfield  {journal} {\bibinfo
  {journal} {Phys. Rev. A}\ }\textbf {\bibinfo {volume} {46}},\ \bibinfo
  {pages} {2229} (\bibinfo {year} {1992}{\natexlab{b}})}\BibitemShut {NoStop}%
\bibitem [{\citenamefont {\ifmmode~\dot{Z}\else \.{Z}\fi{}ukowski}\ \emph
  {et~al.}(1993)\citenamefont {\ifmmode~\dot{Z}\else \.{Z}\fi{}ukowski},
  \citenamefont {Zeilinger}, \citenamefont {Horne},\ and\ \citenamefont
  {Ekert}}]{zukowski1993}%
  \BibitemOpen
  \bibfield  {author} {\bibinfo {author} {\bibfnamefont {M.}~\bibnamefont
  {\ifmmode~\dot{Z}\else \.{Z}\fi{}ukowski}}, \bibinfo {author} {\bibfnamefont
  {A.}~\bibnamefont {Zeilinger}}, \bibinfo {author} {\bibfnamefont {M.~A.}\
  \bibnamefont {Horne}}, \ and\ \bibinfo {author} {\bibfnamefont {A.~K.}\
  \bibnamefont {Ekert}},\ }\href {\doibase 10.1103/PhysRevLett.71.4287}
  {\bibfield  {journal} {\bibinfo  {journal} {Phys. Rev. Lett.}\ }\textbf
  {\bibinfo {volume} {71}},\ \bibinfo {pages} {4287} (\bibinfo {year}
  {1993})}\BibitemShut {NoStop}%
\bibitem [{\citenamefont {\ifmmode~\dot{Z}\else
  \.{Z}\fi{}ukowski}(2000)}]{zukowski2000}%
  \BibitemOpen
  \bibfield  {author} {\bibinfo {author} {\bibfnamefont {M.}~\bibnamefont
  {\ifmmode~\dot{Z}\else \.{Z}\fi{}ukowski}},\ }\href {\doibase
  10.1103/PhysRevA.61.022109} {\bibfield  {journal} {\bibinfo  {journal} {Phys.
  Rev. A}\ }\textbf {\bibinfo {volume} {61}},\ \bibinfo {pages} {022109}
  (\bibinfo {year} {2000})}\BibitemShut {NoStop}%
\bibitem [{\citenamefont {Pan}\ \emph {et~al.}(2012)\citenamefont {Pan},
  \citenamefont {Chen}, \citenamefont {Lu}, \citenamefont {Weinfurter},
  \citenamefont {Zeilinger},\ and\ \citenamefont {\ifmmode~\dot{Z}\else
  \.{Z}\fi{}ukowski}}]{pan2012}%
  \BibitemOpen
  \bibfield  {author} {\bibinfo {author} {\bibfnamefont {J.-W.}\ \bibnamefont
  {Pan}}, \bibinfo {author} {\bibfnamefont {Z.-B.}\ \bibnamefont {Chen}},
  \bibinfo {author} {\bibfnamefont {C.-Y.}\ \bibnamefont {Lu}}, \bibinfo
  {author} {\bibfnamefont {H.}~\bibnamefont {Weinfurter}}, \bibinfo {author}
  {\bibfnamefont {A.}~\bibnamefont {Zeilinger}}, \ and\ \bibinfo {author}
  {\bibfnamefont {M.}~\bibnamefont {\ifmmode~\dot{Z}\else \.{Z}\fi{}ukowski}},\
  }\href {\doibase 10.1103/RevModPhys.84.777} {\bibfield  {journal} {\bibinfo
  {journal} {Rev. Mod. Phys.}\ }\textbf {\bibinfo {volume} {84}},\ \bibinfo
  {pages} {777} (\bibinfo {year} {2012})}\BibitemShut {NoStop}%
\bibitem [{\citenamefont {Blasiak}\ \emph {et~al.}(2020)\citenamefont
  {Blasiak}, \citenamefont {Borsuk},\ and\ \citenamefont
  {Markiewicz}}]{blasiak2020}%
  \BibitemOpen
  \bibfield  {author} {\bibinfo {author} {\bibfnamefont {P.}~\bibnamefont
  {Blasiak}}, \bibinfo {author} {\bibfnamefont {E.}~\bibnamefont {Borsuk}}, \
  and\ \bibinfo {author} {\bibfnamefont {M.}~\bibnamefont {Markiewicz}},\
  }\href@noop {} {\enquote {\bibinfo {title} {On safe post-selection for bell
  nonlocality: Causal diagram approach},}\ } (\bibinfo {year} {2020}),\ \Eprint
  {http://arxiv.org/abs/2012.07285} {arXiv:2012.07285 [quant-ph]} \BibitemShut
  {NoStop}%
\bibitem [{\citenamefont {Gallego}\ \emph {et~al.}(2012)\citenamefont
  {Gallego}, \citenamefont {W\"urflinger}, \citenamefont {Ac\'{\i}n},\ and\
  \citenamefont {Navascu\'es}}]{gallego2012}%
  \BibitemOpen
  \bibfield  {author} {\bibinfo {author} {\bibfnamefont {R.}~\bibnamefont
  {Gallego}}, \bibinfo {author} {\bibfnamefont {L.~E.}\ \bibnamefont
  {W\"urflinger}}, \bibinfo {author} {\bibfnamefont {A.}~\bibnamefont
  {Ac\'{\i}n}}, \ and\ \bibinfo {author} {\bibfnamefont {M.}~\bibnamefont
  {Navascu\'es}},\ }\href {\doibase 10.1103/PhysRevLett.109.070401} {\bibfield
  {journal} {\bibinfo  {journal} {Phys. Rev. Lett.}\ }\textbf {\bibinfo
  {volume} {109}},\ \bibinfo {pages} {070401} (\bibinfo {year}
  {2012})}\BibitemShut {NoStop}%
\bibitem [{new_defs()}]{new_defs}%
  \BibitemOpen
  \href@noop {} {}\bibinfo {note} {A different definition of genuine (network) multipartite nonlocality (and entanglement) that is based on the resource theory of local operations and shared randomness (LOSR) has been recently introduced \cite{schmid2021,navascues2020}.}\BibitemShut
  {Stop}%
\bibitem [{\citenamefont {Pearle}(1970)}]{pearle1970}%
  \BibitemOpen
  \bibfield  {author} {\bibinfo {author} {\bibfnamefont {P.~M.}\ \bibnamefont
  {Pearle}},\ }\href {\doibase 10.1103/PhysRevD.2.1418} {\bibfield  {journal}
  {\bibinfo  {journal} {Phys. Rev. D}\ }\textbf {\bibinfo {volume} {2}},\
  \bibinfo {pages} {1418} (\bibinfo {year} {1970})}\BibitemShut {NoStop}%
\bibitem [{\citenamefont {Wood}\ and\ \citenamefont
  {Spekkens}(2015)}]{wood2015}%
  \BibitemOpen
  \bibfield  {author} {\bibinfo {author} {\bibfnamefont {C.~J.}\ \bibnamefont
  {Wood}}\ and\ \bibinfo {author} {\bibfnamefont {R.~W.}\ \bibnamefont
  {Spekkens}},\ }\href {\doibase 10.1088/1367-2630/17/3/033002} {\bibfield
  {journal} {\bibinfo  {journal} {New J. Phys.}\ }\textbf {\bibinfo {volume}
  {17}},\ \bibinfo {pages} {033002} (\bibinfo {year} {2015})}\BibitemShut
  {NoStop}%
\bibitem [{\citenamefont {Allen}\ \emph {et~al.}(2017)\citenamefont {Allen},
  \citenamefont {Barrett}, \citenamefont {Horsman}, \citenamefont {Lee},\ and\
  \citenamefont {Spekkens}}]{allen2017}%
  \BibitemOpen
  \bibfield  {author} {\bibinfo {author} {\bibfnamefont {J.-M.~A.}\
  \bibnamefont {Allen}}, \bibinfo {author} {\bibfnamefont {J.}~\bibnamefont
  {Barrett}}, \bibinfo {author} {\bibfnamefont {D.~C.}\ \bibnamefont
  {Horsman}}, \bibinfo {author} {\bibfnamefont {C.~M.}\ \bibnamefont {Lee}}, \
  and\ \bibinfo {author} {\bibfnamefont {R.~W.}\ \bibnamefont {Spekkens}},\
  }\href {\doibase 10.1103/PhysRevX.7.031021} {\bibfield  {journal} {\bibinfo
  {journal} {Phys. Rev. X}\ }\textbf {\bibinfo {volume} {7}},\ \bibinfo {pages}
  {031021} (\bibinfo {year} {2017})}\BibitemShut {NoStop}%
\bibitem [{SI()}]{SI}%
  \BibitemOpen
  \href@noop {} {}\bibinfo {note} {See Supplementary Material for more details of the proof of condition \textbf{I} in Theorem 1 and the full proof of Theorem 2.}\BibitemShut
  {Stop}%
\bibitem [{\citenamefont {Mitchell}\ \emph {et~al.}(2004)\citenamefont
  {Mitchell}, \citenamefont {Popescu},\ and\ \citenamefont
  {Roberts}}]{mitchell2004}%
  \BibitemOpen
  \bibfield  {author} {\bibinfo {author} {\bibfnamefont {P.}~\bibnamefont
  {Mitchell}}, \bibinfo {author} {\bibfnamefont {S.}~\bibnamefont {Popescu}}, \
  and\ \bibinfo {author} {\bibfnamefont {D.}~\bibnamefont {Roberts}},\ }\href
  {\doibase 10.1103/PhysRevA.70.060101} {\bibfield  {journal} {\bibinfo
  {journal} {Phys. Rev. A}\ }\textbf {\bibinfo {volume} {70}},\ \bibinfo
  {pages} {060101(R)} (\bibinfo {year} {2004})}\BibitemShut {NoStop}%
\bibitem [{\citenamefont {Larsson}(1998)}]{larsson1998}%
  \BibitemOpen
  \bibfield  {author} {\bibinfo {author} {\bibfnamefont {J.-A.}\ \bibnamefont
  {Larsson}},\ }\href {\doibase 10.1103/PhysRevA.57.3304} {\bibfield  {journal}
  {\bibinfo  {journal} {Phys. Rev. A}\ }\textbf {\bibinfo {volume} {57}},\
  \bibinfo {pages} {3304} (\bibinfo {year} {1998})}\BibitemShut {NoStop}%
\bibitem [{\citenamefont {Schmid}\ \emph {et~al.}(2021)\citenamefont {Schmid},
  \citenamefont {Fraser}, \citenamefont {Kunjwal}, \citenamefont {Sainz},
  \citenamefont {Wolfe},\ and\ \citenamefont {Spekkens}}]{schmid2021}%
  \BibitemOpen
  \bibfield  {author} {\bibinfo {author} {\bibfnamefont {D.}~\bibnamefont
  {Schmid}}, \bibinfo {author} {\bibfnamefont {T.~C.}\ \bibnamefont {Fraser}},
  \bibinfo {author} {\bibfnamefont {R.}~\bibnamefont {Kunjwal}}, \bibinfo
  {author} {\bibfnamefont {A.~B.}\ \bibnamefont {Sainz}}, \bibinfo {author}
  {\bibfnamefont {E.}~\bibnamefont {Wolfe}}, \ and\ \bibinfo {author}
  {\bibfnamefont {R.~W.}\ \bibnamefont {Spekkens}},\ }\href@noop {} {\enquote
  {\bibinfo {title} {Understanding the interplay of entanglement and
  nonlocality: motivating and developing a new branch of entanglement
  theory},}\ } (\bibinfo {year} {2021}),\ \Eprint
  {http://arxiv.org/abs/2004.09194} {arXiv:2004.09194 [quant-ph]} \BibitemShut
  {NoStop}%
\bibitem [{\citenamefont {Navascu\'es}\ \emph {et~al.}(2020)\citenamefont
  {Navascu\'es}, \citenamefont {Wolfe}, \citenamefont {Rosset},\ and\
  \citenamefont {Pozas-Kerstjens}}]{navascues2020}%
  \BibitemOpen
  \bibfield  {author} {\bibinfo {author} {\bibfnamefont {M.}~\bibnamefont
  {Navascu\'es}}, \bibinfo {author} {\bibfnamefont {E.}~\bibnamefont {Wolfe}},
  \bibinfo {author} {\bibfnamefont {D.}~\bibnamefont {Rosset}}, \ and\ \bibinfo
  {author} {\bibfnamefont {A.}~\bibnamefont {Pozas-Kerstjens}},\ }\href
  {\doibase 10.1103/PhysRevLett.125.240505} {\bibfield  {journal} {\bibinfo
  {journal} {Phys. Rev. Lett.}\ }\textbf {\bibinfo {volume} {125}},\ \bibinfo
  {pages} {240505} (\bibinfo {year} {2020})}\BibitemShut {NoStop}%
\end{thebibliography}
\end{document}